\newcommand{\AAA}{$\mathcal{A}$}
\newcommand{\CC}{\mathcal{C}}
\newcommand{\GG}{\mathcal{G}}
\newcommand{\HH}{\mathcal{H}}
\newcommand{\NN}{{\mathbb N}}
\newcommand{\SSS}{\mathcal{S}}
\newtheorem{theorem}{Theorem}
\newtheorem{proposition}[theorem]{Proposition}
\newtheorem{lemma}[theorem]{Lemma}
\newtheorem{corollary}[theorem]{Corollary}
\newtheorem{conjecture}{Conjecture}
\newtheorem{problem}{Problem}
\begin{document}
\title{Recovering sparse graphs}
\author{Jakub Gajarsk\'y\thanks{Technical University Berlin. E-mail: {\tt jakub.gajarsky@tu-berlin.de}. This author's research was supported by the European Research Council (ERC) under the European Union's Horizon 2020 research and innovation programme (ERC Consolidator Grant DISTRUCT, grant agreement No 648527).}\textsuperscript{\hskip 1ex,}\footnotemark[3]\and
        Daniel Kr\'al'\thanks{Mathematics Institute, DIMAP and Department of Computer Science, University of Warwick, Coventry CV4 7AL, UK. E-mail: {\tt d.kral@warwick.ac.uk}. The work of this author was supported by the European Research Council (ERC) under the European Union’s Horizon 2020 research and innovation programme (ERC Consolidator Grant LADIST, grant agreement No 648509).}\textsuperscript{\hskip 1ex,}\thanks{This publication reflects only its authors' view; the European Research Council Executive Agency is not responsible for any use that may be made of the information it contains.}}
\date{}
\maketitle

\begin{abstract}
We construct a fixed parameter algorithm parameterized by $d$ and $k$ that 
takes as an input a graph $G'$ obtained
from a $d$-degenerate graph $G$ by complementing on at most $k$ arbitrary subsets of the vertex set of $G$ and
outputs a graph $H$ such that $G$ and $H$ agree on all but $f(d,k)$ vertices.

Our work is motivated by the first order model checking in graph classes that
are first order interpretable in classes of sparse graphs.
We derive as a corollary that if $\GG$ is a graph class with bounded expansion,
then the first order model checking is fixed parameter tractable in the class of all graphs that
can obtained from a graph $G\in\GG$ by complementing on at most $k$ arbitrary subsets of the vertex set of $G$;
this implies an earlier result that the first order model checking is
fixed parameter tractable in graph classes interpretable in classes of graphs with bounded maximum degree.
\end{abstract}

\section{Introduction}

The work presented in this paper is motivated by the line of research on algorithmic metatheorems,
general algorithmic results that guarantee the existence of efficient algorithms for wide classes of problems.
The most classical example of such a result is the celebrated theorem of Courcelle~\cite{bib-courcelle90} asserting
that every monadic second order property can be model checked in linear time in every class of graphs with bounded tree-width;
further results of this kind can be found in the survey~\cite{bib-kreutzer09}.
Specifically, our motivation comes from the first order model checking in sparse graph classes and
attempts to extend these results to classes of dense graphs with structural properties close to sparse graph classes.

The two very classical algorithms for the first order model checking in sparse graph classes
are the linear time algorithm of Seese~\cite{bib-seese96} for graphs with bounded maximum degree and
the linear time algorithm of Frick and Grohe~\cite{bib-frick01+} for planar graphs,
which can also be adapted to an almost linear time algorithm for graphs with locally bounded tree-width.
These results were extended to many other classes of sparse graphs,
in particular to graphs locally excluding a minor by Dawar, Grohe and Kreutzer~\cite{bib-dawar07+} and
to the very general graph classes with bounded expansion,
which were introduced in~\cite{bib-nes0,bib-nes1,bib-nes2,bib-nes3},
by Dawar and Kreutzer~\cite{bib-dawar09+} (see~\cite{bib-grohe11+} for further details) and,
independently, by Dvo\v{r}\'ak, Kr\'al' and Thomas~\cite{bib-dvorak10+,bib-dvorak13+}.
This line of research ultimately culminated with the result of Grohe, Kreutzer and Siebertz~\cite{bib-grohe14+},
who proved that the first order model checking is fixed-parameter tractable in nowhere-dense classes of graphs
by giving an almost linear time algorithm for this problem when parameterized by the class and the property.

The results that we have just mentioned concern classes of sparse graphs.
While they cannot be extended to all somewhere-dense classes of graphs, see e.g.~\cite{bib-dvorak13+},
it is still possible to hope for proving tractability results for dense graphs that
possess structural properties making first order model checking feasible.
For example, a well-known theorem of Courcelle, Makowsky and Rotics~\cite{bib-courcelle00+}
on monadic second order model checking in classes of graphs with bounded clique-width
implies that first order model checking is tractable for classes of graphs with bounded clique-width;
in relation to the results that we present further,
it is interesting to note that graph classes that can be first order interpreted in classes of graphs with bounded clique-width
also have bounded clique-width~\cite[Corollary 7.38]{bib-courcelle12+}).
Another approach is studying graphs defined by geometric means~\cite{bib-ganian15+,bib-hlineny17+,bib-ken17+}.
The approach that we are interested in here lies in considering graph classes derived
from sparse graph classes by first order interpretations as in~\cite{bib-gajarsky16+,bib-gajarsky18+};
the definition of a first order interpretation can be found in Section~\ref{sec-prelim}.

Specifically,
we are motivated by the following very general folklore conjecture.
\begin{conjecture}
\label{conj-folklore}
The first order model checking is fixed parameter tractable in $I(\GG)$
when parameterized by a graph class $\GG$ with bounded expansion,
a simple first order graph interpretation scheme $I$ and a first order property to be tested.
\end{conjecture}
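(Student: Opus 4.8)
The plan is to reduce first order model checking in $I(\GG)$ to first order model checking in $\GG$ itself, which is already known to be fixed parameter tractable for classes $\GG$ with bounded expansion by the results of Dvo\v{r}\'ak, Kr\'al' and Thomas and of Grohe, Kreutzer and Siebertz cited above. The reduction rests on the standard back-translation property of interpretations: if $I$ is a simple first order interpretation scheme, then for every first order sentence $\varphi$ there is a first order sentence $\varphi^I$, obtained by substituting the formulas of $I$ for the adjacency and equality relations and relativizing the quantifiers to the domain formula, such that $I(G)\models\varphi$ if and only if $G\models\varphi^I$. Consequently, if on input $G'\in I(\GG)$ one could produce a graph $G\in\GG$ with $I(G)=G'$, it would suffice to run the sparse model checking algorithm on $G$ with the sentence $\varphi^I$.

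The difficulty is that the scheme $I$ is given but a preimage $G$ is not: the map induced by $I$ is many-to-one and far from invertible, so essentially the whole weight of the problem lies in \emph{recovering}, from the dense graph $G'$, a sparse graph from which $G'$ can be reconstructed. I expect this recovery step to be the main obstacle, and indeed it is what keeps the statement a conjecture in full generality. The approach I would take is to aim not for an exact preimage but for an approximate one: compute a graph $H$ lying in a class of bounded expansion together with a bounded exceptional set $X$ of vertices such that, outside $X$, applying $I$ to $H$ reproduces $G'$. This is exactly the guarantee the main theorem of this paper supplies when $I$ is complementation on at most $k$ subsets, where the subsets are encoded as $k$ unary predicates on the sparse preimage: the algorithm returns $H$ agreeing with the sparse preimage $G$ on all but $f(d,k)$ vertices.

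Given such an approximate recovery, the finishing steps are comparatively routine. First I would deal with the bounded exceptional set $X$ by branching: each vertex of $X$ is marked by one of finitely many new unary predicates recording how it behaves under $I$ relative to the rest of the graph, which multiplies the running time only by a function of $\vert X\vert$ and hence of the parameters $d$ and $k$. Then $\varphi$ is rewritten as $\varphi^I$ over the marked recovered graph, the marks being handled by the vertex- and edge-colored version of the sparse model checking algorithm, which remains fixed parameter tractable since colored classes of bounded expansion again have bounded expansion. Composing the bounded branching over $X$ with one call to the colored sparse algorithm then yields the desired fixed parameter running time.

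The crux, to restate, is the recovery step. For complementation the combinatorial structure is rigid enough that recovery is possible up to a bounded error, which is the content of this paper; for an arbitrary simple interpretation scheme the preimage need not be determined even approximately, and isolating a structural condition under which recovery is feasible, or a way to bypass recovery altogether, is where the essential work of proving the conjecture in its full strength would have to be done.
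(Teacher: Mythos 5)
The statement you are asked to prove is Conjecture~\ref{conj-folklore}, which the paper itself explicitly leaves open; there is no proof of it in the paper to compare against, and your proposal, as you acknowledge, does not supply one either. What you have written is an accurate account of the intended reduction and of exactly where it breaks down. The back-translation step (replacing $\varphi$ by $\varphi^I$) and the finishing steps (absorbing a bounded exceptional set and invoking the colored bounded-expansion model checker) are sound and match what the paper actually does in its proof of Theorem~\ref{thm-model}; the paper handles the exceptional vertices not by branching but by passing to the class of $m$-apices of subgraphs of graphs in $\GG$ (Proposition~\ref{prop-apex}) and encoding the recovered pattern as a vertex coloring of boundedly many colors, which achieves the same effect more cleanly. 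The genuine gap is the one you name: the recovery step is only available when the interpretation acts by complementing a bounded number of vertex subsets, which is the content of Theorem~\ref{thm-main} and Corollary~\ref{cor-main}, and this does not cover arbitrary simple interpretation schemes over classes of bounded expansion.

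Moreover, the obstruction is worse than the preimage merely being hard to compute. As the paper shows in Section~\ref{sec-concl}, for $\GG$ the class of star forests and $I$ the distance-at-most-two interpretation, the class $I(\GG)$ is not contained in $\HH^R$ for any bounded expansion class $\HH$ and any pattern $R$, since any such $\HH$ would have to contain arbitrarily large complete bipartite subgraphs. So no refinement of the recovery algorithm for complementations can close the conjecture: the complementation framework itself fails to capture general interpretations of bounded expansion classes, and a genuinely different decomposition (such as the one characterized, without an effective algorithm, in~\cite{bib-gajarsky18+}) would be needed. Your proposal should therefore be read as a correct explanation of why the conjecture is open and of what the paper contributes toward it, not as a proof of the conjecture.
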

The first step towards this conjecture was obtained in~\cite{bib-gajarsky16+},
where it was shown that Conjecture~\ref{conj-folklore} holds for classes of graphs with bounded maximum degree.
\begin{theorem}
\label{thm-maxdeg}
The first order model checking is fixed parameter tractable in $I(\GG)$
when parameterized by a class $\GG$ of graphs with bounded maximum degree,
a simple first order graph interpretation scheme $I$ and a first order property to be tested.
\end{theorem}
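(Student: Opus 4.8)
The plan is to reduce first order model checking in the dense class $I(\GG)$ to first order model checking in a class of bounded degree graphs, where the linear time algorithm of Seese~\cite{bib-seese96} is available. The obstacle is that the input is an interpreted graph $G'=I(G)$ and not the sparse graph $G$ that produced it, so the crux of the argument will be a \emph{recovery} step that reconstructs a sparse representation of $G'$. Once such a representation is found, I would pull the tested sentence $\varphi$ back through the relevant interpretation and feed the result to Seese's algorithm.

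I would begin with the structural claim that every graph in $I(\GG)$ is obtained from a graph of bounded degree by complementing on a bounded number of vertex subsets. Let $\psi(x,y)$ be the formula defining the edges of $I$. By Gaifman locality there is a radius $r=r(\psi)$ such that for vertices $u,v$ of $G$ at distance greater than $2r$ the truth of $\psi(u,v)$ depends only on the isomorphism types of their $r$-balls. Since $G$ has maximum degree at most some $D$, there are only finitely many such types, say $t=t(D,\psi)$ of them, partitioning $V(G)$ into classes $C_1,\dots,C_t$. On far apart pairs the edges of $G'$ thus form a blow-up of a fixed pattern on these $t$ classes, which can be produced by complementing on at most $k=k(t)$ subsets, each a union of some of the $C_i$; on pairs at distance at most $2r$ each vertex has at most $D^{2r+1}$ incident edges, so the remaining edges form a bounded degree graph. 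Absorbing the effect of these complementations on the close pairs into a bounded degree graph $B$, I obtain that $G'$ equals $B$ complemented on at most $k$ subsets $S_1,\dots,S_k$; in particular $B$ is $d$-degenerate with $d=d(D,\psi)$.

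Conversely, complementing on $S_1,\dots,S_k$ is itself a fixed first order interpretation $J$ of the colored graph $B^+$ obtained by marking $B$ with the unary predicates $S_1,\dots,S_k$, since $x\sim_{G'}y$ holds exactly when adjacency in $B$ is flipped by the parity of the subsets containing both $x$ and $y$. Hence $G'\models\varphi$ if and only if $B^+\models\varphi^J$, and as $B^+$ is a bounded degree graph with a bounded number of colors, the latter can be decided in linear time by the colored variant of Seese's algorithm. It therefore suffices to produce the representation $(B,S_1,\dots,S_k)$ from $G'$. This is where I would invoke the recovery algorithm of the abstract, which applied with parameters $d$ and $k$ returns a graph agreeing with $B$ on all but $f(d,k)$ vertices. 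Let $X$ be this bounded exceptional set. Because $B$ has bounded degree and $|X|$ is bounded, the edges of $B$ incident to $X$ and the memberships of $X$ in the subsets $S_1,\dots,S_k$ range over boundedly many possibilities; I would enumerate all of them, keep those for which the resulting $(B,S_1,\dots,S_k)$ reproduces the input $G'$ under complementation, and run the model checking step on any surviving candidate. The true representation survives, so this decides $G'\models\varphi$, and the whole procedure runs in time $g(\GG,I,\varphi)\cdot |G'|^{O(1)}$.

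The hard part will be the recovery step. Complementation destroys the very sparsity that the model checking algorithm exploits, so one cannot read the neighborhoods of $B$ off $G'$ directly; instead the vertices sharing a common type must be clustered according to the near-equality of their neighborhoods in $G'$ after the bounded local part has been discounted, and this clustering has to be shown robust enough to be correct outside a bounded set of vertices. Making this precise, and simultaneously identifying the complementation subsets, is exactly the content of the recovery theorem stated in the abstract, to which I would defer for the detailed combinatorial argument.
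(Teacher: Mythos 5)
Your proposal is correct and follows essentially the same route as the paper: Gaifman locality to show $I(\GG)\subseteq\GG_D^k$ (this is Proposition~\ref{prop-maxdeg1}), the recovery algorithm of Corollary~\ref{cor-main} to reconstruct a bounded-degree graph together with the complementation data up to a bounded exceptional set, and then re-encoding the complementations as a first order interpretation of a colored sparse graph to which the sparse model checking machinery applies, exactly as in the proof of Theorem~\ref{thm-model}. The only cosmetic difference is that the paper extracts the complementation subsets as the twin-classes of the recovered blow-up $F$ rather than by clustering neighborhoods and enumerating, but this is the same idea.
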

A combinatorial characterization of classes of graphs interpretable in graph classes of bounded expansion
was given in~\cite{bib-gajarsky18+}.
However,
the characterization does not come with an efficient algorithm to compute the corresponding decomposition.
So, Conjecture~\ref{conj-folklore} remains open.
The approach taken in this paper can be seen as complementary to the one used in~\cite{bib-gajarsky18+}
since we attempt to directly reverse the effect of the first order interpretation.

To motivate our approach, we sketch the proof of Theorem~\ref{thm-maxdeg} from~\cite{bib-gajarsky16+}.
The core of the proof lies in considering first order graph interpretation schemes $I$
where the vertex sets of $G$ and $I(G)$ are the same and
constructing an algorithm that recovers a graph $H$ from $I(G)$ such that
the graphs $G$ and $H$ have the same vertex set and they agree on most of the edges.
We now describe the approach from~\cite{bib-gajarsky16+} phrased in the terminology used in this paper.

We start with introducing additional notation.
A {\em pattern} is a graph $R$ that may contain loops and
it does not contain a pair of adjacent twins that both have loops, or
a pair of non-adjacent twins that neither of them has a loop,
i.e., a graph that has no non-trivial induced endomorphism.
To make our exposition more transparent,
we will further refer to vertices of patterns as to {\em nodes} and generally denote them by $u$ with different subscripts and superscripts;
vertices of graphs that are not patterns will generally be denoted by $v$ with different subscripts and superscripts.
Let $G$ be a graph, $R$ a pattern and
$(V_u)_{u\in V(R)}$ a partition of the vertices of $G$ into parts indexed by the nodes of $R$.
The graph $G^R$ is the graph with the same vertex set as $G$ such that
if $v,v'\in V(G)$, $v\in V_u$ and $v'\in V_{u'}$,
then $vv'$ is an edge in $G^R$ if and only if
either $vv'$ is an edge of $G$ and $uu'$ is not an edge of $R$ or
$vv'$ is not an edge of $G$ or $uu'$ is an edge of $R$.
Alternatively, we may define the graph $G^R$ to be the graph obtained from $G$
by complementing all edges inside sets $V_u$ for each node $u$ with a loop and
between sets $V_u$ and $V_{u'}$ for each edge $uu'$ of $R$.
Note that the graph $G^R$ depends on the chosen partition of the vertex set of $G$;
this partition will always be clear from the context.

A very simple example of the introduced notion is a pattern $R$ that consists of a single node $u$ with a loop.
For every graph $G$, there is only one single class partition of $V(G)$, i.e., $V_u = V(G)$, and
$G^R$ is then the complement of $G$. 
Similarly, if $R$ has two vertices $u$ and $u'$ and the edges $uu$ (loop) and $uu'$, and 
the vertex set of a graph $G$ is partitioned into sets $V_u$ and $V_{u'}$,
then $G^R$ is obtained from $G$ by complementing all edges inside $V_u$ and
all edges between $V_u$ and $V_{u'}$.

We now continue with the exposition of the proof of Theorem~\ref{thm-maxdeg} from~\cite{bib-gajarsky16+}.
Simple first order graph interpretation schemes of graphs with bounded maximum degree
are very closely linked to patterns as given in the next proposition,
which directly follows from Gaifman's theorem~\cite{bib-gajarsky16+}. 
The proposition essentially says that for every integer $d$ and  interpretation scheme $I$,
there exist a pattern $R$ and an integer $D$ such that
the graph $I(G)$ for any graph $G$ with maximum degree $d$
is equal to $H^R$ for a suitable graph $H$ with maximum degree $D$;
note that $R$ and $D$ depend on $I$ and $d$ only.
\begin{proposition}
\label{prop-maxdeg1}
Let $\mathcal{G}_d$ be the class of graphs of maximum degree $d$ and
$I$ a simple first order graph interpretation scheme. 
There exists an integer $D$ and a pattern $R$ such that
for every graph $I(G)$ obtained from $G \in \mathcal{G}_d$ 
there exists a graph $H \in \mathcal{G}_D$ and
a partition $(V_u)_{u\in V(R)}$ of the vertex set of $H$ such that
the graphs $I(G)$ and $H^R$ are the same.
\end{proposition}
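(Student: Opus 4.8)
The plan is to run Gaifman's locality theorem on the edge formula of $I$, to read off the pattern $R$ from the behaviour of that formula on pairs of vertices that are \emph{far apart} in $G$, and to use the edges of $H$ only to patch up the boundedly many \emph{close} pairs. Write $\varphi(x,y)$ for the symmetric formula defining the edge relation of the simple interpretation $I$. By Gaifman's theorem $\varphi$ is equivalent to a Boolean combination of basic local sentences $\sigma_1,\dots,\sigma_m$ and $t$-local formulas $\lambda_1(x,y),\dots,\lambda_p(x,y)$, where $t$ depends only on $I$. In a graph $G\in\mathcal{G}_d$ the ball of radius $t$ around any vertex has at most $1+d+\cdots+d^{t}$ vertices, so there are only finitely many isomorphism types of such balls with a marked centre; I call these the \emph{local types}, and I let $\vec b(G)\in\{0,1\}^m$ record the truth values of $\sigma_1,\dots,\sigma_m$ in $G$. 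Both the set of local types and the set of possible vectors $\vec b$ depend only on $d$ and $I$.

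First I would treat a pair $v,v'$ with $\mathrm{dist}_G(v,v')>2t$. For such a pair the $t$-balls around $v$ and around $v'$ are disjoint and mutually non-adjacent, so the substructure on which each $\lambda_i(v,v')$ is evaluated is the disjoint union of these two balls. A Feferman--Vaught style composition argument then shows that the truth value of $\lambda_i(v,v')$ is a function of the local type of $v$ and the local type of $v'$ alone; hence, once $\vec b(G)$ is fixed, whether $vv'\in E(I(G))$ is determined by the unordered pair of local types of $v$ and $v'$. This is exactly the information that a pattern encodes. I would therefore take the node set of $R$ to be the product of the set of local types with $\{0,1\}^m$, place a loop on a node $(\tau,\vec b)$ and an edge between $(\tau,\vec b)$ and $(\tau',\vec b)$ precisely when the above function (evaluated with the global values $\vec b$) returns ``adjacent,'' and leave adjacencies between nodes carrying different vectors $\vec b$ arbitrary. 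The factor $\{0,1\}^m$ is what lets a single pattern serve all $G$ at once: every vertex of a fixed $G$ carries the same vector $\vec b(G)$, so only the corresponding slice of $R$ is ever used.

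With $R$ fixed I would define $H$ and the partition on $V(G)=V(H)$. The partition sends a vertex $v$ to the part indexed by $(\tau_v,\vec b(G))$, where $\tau_v$ is its local type. For the edge set of $H$ I put no edge on any far pair, and on a close pair $v,v'$ (distance at most $2t$) I put an edge exactly when the true value of $\varphi(v,v')$ differs from the complementation bit prescribed by $R$ for the two parts containing $v$ and $v'$. Unwinding the definition of $H^R$ then gives $H^R=I(G)$: far pairs are handled by $R$ alone and agree by the composition argument, while each close pair is corrected by its single patching edge. Since $H$ has edges only between vertices at distance at most $2t$ in $G$, and each vertex has at most $D:=d+d^2+\cdots+d^{2t}$ such partners, the graph $H$ has maximum degree at most $D$, and $D$ depends only on $d$ and $I$. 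Finally I would turn $R$ into a genuine pattern by iteratively merging twin nodes (adjacent twins both carrying loops, and non-adjacent twins carrying no loop) and fusing the corresponding parts, which removes the forbidden non-trivial endomorphisms without changing $H^R$.

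I expect the composition argument for far pairs to be the main obstacle: one must argue carefully that a $t$-local formula evaluated at two points whose $t$-neighbourhoods form a disjoint union decomposes into a Boolean combination of conditions on the two local types separately, with the decomposition controlled by the quantifier rank of $\lambda_i$. The bounded-degree hypothesis enters twice and must be used cleanly---once to make the set of local types finite, so that $R$ is finite, and once to bound the number of close pairs per vertex, so that $H$ lands in $\mathcal{G}_D$. The uniform treatment of the global sentences through the product construction is the remaining point needing attention, but it is comparatively routine once the local analysis is in place.
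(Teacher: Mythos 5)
Your proposal is correct and follows exactly the route the paper indicates: the paper gives no proof of Proposition~\ref{prop-maxdeg1}, stating only that it follows directly from Gaifman's theorem as in the cited work, and your Gaifman-locality argument (local types of $t$-balls deciding adjacency for far pairs, a bounded-degree patching graph $H$ for close pairs, the $\{0,1\}^m$ product to absorb the basic local sentences, and a final twin reduction to obtain a genuine pattern) is the standard way to fill in that claim. The only adjustment needed is that $V(H)$ should be $\{v\in V(G):G\models\psi_V(v)\}$ rather than all of $V(G)$, since a simple interpretation scheme may also restrict the vertex set; this changes nothing else in the argument.
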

This characterization of graphs that can be interpreted in a class of graphs with bounded maximum degree
is then combined with the following ``recovery'' algorithm,
which is implicit in~\cite{bib-gajarsky16+}, to get a proof of Theorem~\ref{thm-maxdeg}.
Note the algorithm \AAA{} from Theorem~\ref{thm-maxdeg2} has two parameters,
one controls the complexity of the structure of a graph and
the other controls the complexity of its transformation.
\begin{theorem}
\label{thm-maxdeg2}
There exists an algorithm \AAA{} that
is fixed parameter with respect to an integer parameter $D$ and a pattern $R$ and
has the following property:
for all $D$ and $R$,
there exist an integer $D'$ and a pattern $R'$ such that
the algorithm \AAA{} takes as an input a graph $G^R$, where $G$ is a graph with maximum degree at most $D$, and
outputs a graph $H$ such that $G^R$ and $H^{R'}$ are the same and the maximum degree of $H$ is at most $D'$.
\end{theorem}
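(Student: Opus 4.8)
The plan is to turn the exact-recovery requirement into a purely quantitative one and then read the needed partition off the neighborhoods in $G^R$. The starting observation is that complementation is an involution: for any graph $X$, any pattern and any fixed partition of $V(X)$, performing the complementation operation twice flips the adjacency of every pair exactly twice and hence returns $X$. Consequently, once the algorithm has fixed any partition $(W_w)_w$ of $V(G^R)$ indexed by the nodes of a pattern $R'$ and outputs the graph $H:=(G^R)^{R'}$ formed with respect to this partition, the identity $H^{R'}=((G^R)^{R'})^{R'}=G^R$ holds for free. The requirement $G^R=H^{R'}$ therefore imposes nothing, and the whole theorem reduces to the following: using only $G^R$, $D$ and $R$, find a pattern $R'$ and a partition of $V(G^R)$ for which undoing the complementation yields a graph $H=(G^R)^{R'}$ of bounded maximum degree. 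Such a partition exists — the unknown partition $(V_u)_{u\in V(R)}$ together with $R'=R$ gives back $G$ — so only the algorithmic task of locating one remains.

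Write $r:=|V(R)|$. The search rests on a dichotomy. If $v\in V_u$ and $V_{u'}$ is any part, then in $G^R$ the neighbors of $v$ inside $V_{u'}$ are either precisely the at most $D$ neighbors that $v$ has inside $V_{u'}$ in $G$, or all but at most $D+1$ vertices of $V_{u'}$, according to whether the pattern $R$ carries an edge between $u$ and $u'$; thus each vertex is adjacent in $G^R$ to almost all, or almost none, of every part. Two vertices of the same part behave identically towards every part, so the symmetric difference of their $G^R$-neighborhoods is at most $t_0:=2(D+1)r$; if instead $v\in V_u$ and $v'\in V_{u'}$ behave differently towards some part $V_{u''}$, that symmetric difference is at least $|V_{u''}|-2D-1$. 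I would hence form the graph on $V(G^R)$ joining two vertices whenever the symmetric difference of their $G^R$-neighborhoods is at most $t_0$, and read its structure. Calling a part \emph{big} if it has more than a threshold $\theta$ of vertices, for a constant $\theta$ chosen well above $t_0$, and \emph{small} otherwise, the number of vertices in small parts is at most $r\theta$, a constant. Every big part is a clique of this graph; and since $R$ is reduced — which is precisely the assertion that any two distinct nodes are separated by some node — two big parts whose nodes differ on some big part are never joined. Hence each cluster of big-part vertices lies within one class of big parts behaving identically towards all big parts, between any two such clusters the correct complementation rule is determined, and undoing it leaves every big-part vertex with at most $D$ neighbors inside each big part, hence a bounded number towards the big clusters.

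The remaining difficulty, which I expect to be the main obstacle, is the fine structure of the constantly many small-part vertices. The grouping above leaves two problems: small-part vertices may be indistinguishable from big-part vertices by the relation, so they must be isolated; and two distinct big parts whose nodes agree on all big parts but differ on a small part are merged into one cluster, which is harmless for big-part vertices — they only ever misjudge adjacency to small parts, at bounded cost — but ruinous for a small-part vertex distinguishing the two, since undoing the common rule would leave it adjacent to almost all of one big part. I would resolve this by exploiting that all the delicate structure involves only the at most $r\theta$ small-part vertices: each of them is placed into its own class of $R'$ with its complementation rule towards every big cluster read directly off $G^R$ and unambiguous because the clusters are large, while each over-merged cluster is split along a distinguishing small part either by a clean degree threshold, when that part has more than $2D$ vertices, or otherwise by copying the $G^R$-neighborhood of one distinguishing vertex, which agrees with one side of the split up to at most $2D$ misassignments that again touch only small-part adjacencies. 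Controlling the identification of the small-part vertices and the propagation of these errors so that every degree stays bounded is the technical heart of the argument. Finally $R'$ is assembled from one node per behavior towards the big clusters together with constantly many nodes for the small-part vertices, a total bounded in terms of $D$ and $r$, and $D'$ is the sum of the at most $r$ contributions of size about $D$ from the big clusters and the bounded contribution of the small-part vertices and of the misassignments; as every threshold and auxiliary set is bounded in terms of $D$ and $R$ and the only global steps are the pairwise symmetric differences and the clustering, the algorithm is fixed-parameter with respect to $D$ and $R$.
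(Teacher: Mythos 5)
First, a calibration point: the paper does not actually prove Theorem~\ref{thm-maxdeg2} --- it states that the result is implicit in~\cite{bib-gajarsky16+} and records only the key fact that two vertices of $G^R$ lying in the same part have degrees within $2D$ of each other. So the relevant comparison is with the machinery the paper builds for the harder degenerate case (Lemmas~\ref{lm-in-fract}--\ref{lm-deletion} and Algorithm~\ref{alg}), and your proposal is exactly that machinery specialized to bounded degree: cluster vertices by near-equality of their $G^R$-neighborhoods, treat the boundedly many exceptional vertices separately, and undo the complementation cluster by cluster. Your opening observation that complementation is an involution, so that $H^{R'}=G^R$ holds for free once $H:=(G^R)^{R'}$, is correct and is the same reduction the paper uses when it outputs $G_0\triangle F$ in the proof of Theorem~\ref{thm-main}. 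The neighborhood dichotomy, the fact that each part is a clique of the similarity graph, and the separation of big parts that disagree on a big part are all sound.

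Two caveats. The smaller one: you justify the separation of distinct big parts by asserting that a pattern is ``precisely'' a graph in which any two distinct nodes are separated by some node; that is not the definition (adjacent twins without loops, and non-adjacent twins with loops, are allowed). The conclusion you need is still true, but for a different reason: if $u\neq u'$ agree on every big part \emph{including $V_u$ and $V_{u'}$ themselves}, then the three conditions ``$u$ has a loop'', ``$uu'\in E(R)$'', ``$u'$ has a loop'' must coincide, which is exactly one of the two forbidden twin configurations; hence distinct big parts can be merged only when they are distinguished solely by small parts. The larger one: the proposal stops at what you yourself identify as the technical heart --- locating the small-part vertices, splitting over-merged clusters, and bounding the propagated errors. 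The fixes you sketch do work (a cluster containing two big parts distinguished by a small part $V_w$ is split, up to at most $2D$ misclassified vertices, by the $G^R$-neighborhood of any single vertex of $V_w$, and every misclassification costs only adjacencies inside small parts, hence a constant per vertex), but none of this is verified, and this bounded-error bookkeeping across at most $|V(R)|$ splits \emph{is} the content of the theorem; you also never address the requirement that a single pattern $R'$ serve all inputs for fixed $D$ and $R$, which needs a padding argument since your $R'$ depends on the input. As submitted this is a correct and well-chosen plan rather than a proof --- but it is the plan the paper carries out in full detail in Section~3 in the strictly harder degenerate setting.
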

One of our main results is an extension of Theorem~\ref{thm-maxdeg2} to classes of $d$-degenerate graphs.
Note that such graph classes include classes with bounded expansion concerned by Conjecture~\ref{conj-folklore}.
This may look like an innocent extension of Theorem~\ref{thm-maxdeg2} at the first sight.
However, the proof of Theorem~\ref{thm-maxdeg2} relies on the fact that
the degrees of any two vertices of $G^R$ that are contained in the same part $V_u$, $u\in V(R)$, differ by at most $2d$,
i.e., it is easy to recognize vertices that belong to the same part.
This is far from being true in the setting of $d$-degenerate graphs,
which leads to a need for a much finer analysis of the structure of an input graph.
\begin{theorem}
\label{thm-main}
There exists an algorithm \AAA{} that
is fixed parameter with respect to integer parameters $d$ and $K$ and
has the following property:
for all $d$ and $K$,
there exists an integer $m$ such that
the algorithm \AAA{} takes as an input a graph $G^R$ and integers $d$ and $K$,
where $G$ is a $d$-degenerate graph and $R$ is a $K$-node pattern (both unknown to \AAA), and
outputs a graph $H$ such that $G$ and $H$ agree on all but at most $m$ vertices.
In particular, the graph $H$ is $(d+m)$-degenerate.
\end{theorem}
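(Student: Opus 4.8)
The plan is to reduce the recovery of $G$ to the problem of recovering, for all but boundedly many vertices, the partition $(V_u)_{u\in V(R)}$ and the pattern $R$; once these are known, undoing the complementation prescribed by $R$ recovers $G$ exactly on the remaining vertices. Since the number of $K$-node patterns is bounded by a function of $K$, the algorithm \AAA{} can afford to iterate over all candidate patterns $R$ and, for each, search for a compatible partition, finally returning any graph $H$ that is certified to be $d$-degenerate after the exceptional vertices are removed. The substance of the proof is therefore the partition-recovery step for a fixed candidate pattern.

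First I would record the global shape of $G^R$. Because $G$ is $d$-degenerate, for any two nodes $u,u'$ the bipartite graph that $G$ induces between $V_u$ and $V_{u'}$ has at most $d(|V_u|+|V_{u'}|)$ edges, and $G[V_u]$ has at most $d|V_u|$ edges. Hence, after complementation, each pair of parts is either \emph{sparse} or \emph{co-sparse} according to whether the corresponding entry of $R$ is absent or present: $G^R$ is the blow-up of $R$ (complete where $R$ has an edge or a loop, empty otherwise) perturbed by a sparse ``noise'' graph that is a relocated copy of $G$. I would fix a threshold $T=T(d,K)$, call a part \emph{large} if $|V_u|\ge T$, and place the vertices of all small parts (at most $KT$ in total) into the exceptional set $S$ from the outset.

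The heart of the matter is to assign each vertex of a large part correctly despite the fact that, unlike in the bounded-degree setting of Theorem~\ref{thm-maxdeg2}, vertices of the same part may have wildly different $G$-degrees and hence wildly different neighbourhoods in $G^R$. The key new ingredient is the degeneracy order: if we orient $G$ so that every vertex has at most $d$ back-neighbours, then, looking \emph{only} at a vertex's adjacency to earlier vertices, it deviates from the bulk blow-up pattern in at most $d$ places, no matter how large its total degree is. Thus the ``backward profile'' of a vertex --- its majority adjacency to the (sufficiently populated) earlier portion of each large part --- is corrupted in at most $d<T/2$ places and therefore reads off the correct row of $R$. Two vertices of the same large part then receive the same profile, and the requirement that $R$ be a pattern (no pair of adjacent twins that both have loops, and no pair of non-adjacent twins neither of which has a loop) guarantees that distinct nodes yield distinct profiles, except for genuine twins of $R$, which I would separate using the internal density of $G^R[V_u]$ (sparse for a loopless node, co-sparse for a node with a loop) together with the mutual adjacency of the two parts.

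The main obstacle is that \AAA{} knows neither $G$ nor a degeneracy order, so these clean backward profiles cannot be computed directly; breaking this circularity is the crux. Here I would exploit that a $d$-degenerate graph contains no $K_{d+1,d+1}$: this bounds how many vertices can share a large common neighbourhood and hence bounds, by a function of $d$ and $K$ alone, the number of vertices in each large part whose membership is \emph{genuinely} ambiguous from $G^R$ (rather than merely from a wrong choice of order). I would use this to extract, directly from $G^R$, a surrogate order good enough to make the backward-profile argument go through for all but boundedly many vertices per part, adding those few vertices to $S$. With $|S|\le m$ for a suitable $m=m(d,K)$, undoing the complementation yields a graph $H$ agreeing with $G$ off $S$; placing the vertices of $S$ first in a degeneracy order of $G-S$ then shows that every vertex has at most $d+m$ back-neighbours, so $H$ is $(d+m)$-degenerate, completing the argument.
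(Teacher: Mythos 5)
Your high-level picture is the right one---$G^R$ is a blow-up of $R$ perturbed by a sparse, relocated copy of $G$; small parts can be discarded into the exceptional set; and a vertex's membership should be read off only from the portion of its neighbourhood where the noise is bounded---and this matches the spirit of the paper's argument. But the two steps you yourself identify as the crux are exactly where the proof lives, and neither is carried out. First, your ``backward profile'' of a vertex is its majority adjacency to the earlier portion of \emph{each large part}, so it presupposes the very partition you are trying to recover; you never explain how to obtain from $G^R$ alone even one reference neighbourhood per part. The paper's substitute for this is the $(160dK^3)$-similarity graph: Lemma~\ref{lm-in-fract} shows each large part contains a near-clique of almost-perfect vertices, Lemma~\ref{lm-out-const} shows that vertices with similar neighbourhoods almost all lie in a single part, and Lemma~\ref{lm-max-degree} concludes that the maximum-degree vertex of the similarity graph is $(u,570dK^4)$-perfect for some node $u$. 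Your proposal has no device playing this role.

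Second, the circularity-breaking step is not resolved by $K_{d+1,d+1}$-freeness. The obstruction is not a bounded set of ``genuinely ambiguous'' vertices: since $N_{G^R}(v)$ for $v\in V_u$ differs from the $u$-perfect set by essentially $\deg_G(v)$, the set of vertices deviating by more than any fixed threshold $T$ can have size $\Theta(n/T)$ (all vertices of $G$-degree above $T$), so no static, single-pass classification can place all but $f(d,K)$ vertices correctly; moreover, a bound on how many vertices of $G$ share a large common neighbourhood says nothing about neighbourhoods in $G^R$ after complementation, where entire parts do share huge common neighbourhoods. What is needed---and what the paper's Algorithm~\ref{alg} supplies---is an iterative peeling: repeatedly remove a vertex whose \emph{residual} neighbourhood inside the not-yet-removed set $W$ is close to some discovered reference set $S_i\cap W$; degeneracy guarantees such a vertex exists whenever $|W|$ is large, the removal order is precisely your ``surrogate order,'' and the error analysis (Lemma~\ref{lm-deletion}) charges all misclassifications to a set of $O(dK^6)$ vertices. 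Finally, your outer loop (``try all patterns and return any certified reconstruction'') tacitly assumes that any near-$d$-degenerate preimage agrees with $G$ off a bounded set, i.e.\ uniqueness of the preimage up to bounded difference; that is itself a nontrivial claim requiring proof, which the paper sidesteps by constructing a single $H$ directly.
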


We next present a corollary of Theorem~\ref{thm-main}, which we believe to be of independent interest.
First observe that complementing edges between two subsets $V$ and $V'$ of the vertex set of $G$
is equivalent to complementing on the following three subsets of vertex set: $V\cup V'$, $V$ and $V'$.
Hence, the graph $G^R$ is obtained from $G$ by complementing on at most $K+{K\choose 2}$ subsets of vertices of $G$,
where $K$ is the number of nodes of $R$.
In the other direction, if a graph $H$ is obtained from $G$ by complementing on at most $k$ subsets of vertices,
there exists a pattern $R$ with at most $2^k$ nodes such that $H=G^R$.
Hence, Theorem~\ref{thm-main} implies the following.
\begin{corollary}
\label{cor-main}
There exists an FPT algorithm \AAA{} with the following property:
for every integer $d$ and an integer $k$,
there exists an integer $m$ such that
the algorithm \AAA{} takes as an input a graph $G'$ obtained 
from a $d$-degenerate graph $G$ by complementing on at most $k$ subsets of the vertex set of $G$ and
outputs a graph $H$ such that $G$ and $H$ agree on all but at most $m$ vertices.
\end{corollary}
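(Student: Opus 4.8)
The plan is to show that complementing on at most $k$ subsets is exactly captured by the pattern formalism with a bounded number of nodes, and then to invoke Theorem~\ref{thm-main} as a black box. Concretely, suppose $G'$ is obtained from a $d$-degenerate graph $G$ by complementing on subsets $S_1,\dots,S_k\subseteq V(G)$ (padding with empty sets if fewer than $k$ are used). First I would assign to each vertex $v\in V(G)$ its \emph{type} $T(v)=\{i : v\in S_i\}\subseteq\{1,\dots,k\}$, and partition $V(G)$ into the at most $2^k$ nonempty classes $V_T=\{v : T(v)=T\}$. The crucial observation is that an edge or non-edge $vv'$ is flipped once for each index $i$ with $v,v'\in S_i$, i.e.\ exactly $|T(v)\cap T(v')|$ times; hence whether $vv'$ is an edge of $G'$ depends only on its status in $G$ and on the parity of $|T(v)\cap T(v')|$.

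Next I would read off this parity condition as a pattern. Let $R_0$ be the graph whose nodes are the nonempty types, with a loop on $T$ whenever $|T|$ is odd and an edge between distinct $T,T'$ whenever $|T\cap T'|$ is odd. By the computation above, complementing inside each $V_T$ according to the loop on $T$ and between each $V_T,V_{T'}$ according to the edge $TT'$ transforms $G$ into exactly $G'$; that is, $G'=G^{R_0}$ for the partition $(V_T)_T$. The graph $R_0$ need not be a pattern, since two distinct types may yield twin nodes of matching loop status, so I would repeatedly merge such twins (correspondingly taking the union of their parts). Merging adjacent twins that both carry loops, or non-adjacent twins that carry none, leaves $G^{R_0}$ unchanged, because in either case the within-class and between-class complementations agree with a single merged node; moreover these are precisely the configurations forbidden in a pattern. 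The result is a genuine pattern $R$ on $K\le 2^k$ nodes with $G'=G^R$.

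Finally I would simply feed $G'$ together with the parameters $d$ and $K:=2^k$ to the algorithm \AAA{} of Theorem~\ref{thm-main}. That algorithm does not need to know $G$, $R$, or the subsets $S_i$; it uses only the bounds $d$ and $K$, and by the preceding paragraph $G'$ indeed equals $G^R$ for some $d$-degenerate $G$ and some pattern $R$ with at most $2^k$ nodes. Theorem~\ref{thm-main} then outputs a graph $H$ agreeing with $G$ on all but at most $m=m(d,2^k)$ vertices, which is the desired conclusion; since $2^k$ is a function of $k$ alone, the whole procedure remains fixed-parameter tractable in $d$ and $k$.

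I would not expect a serious obstacle here: the entire content is this reduction, and the real work has already been carried out in Theorem~\ref{thm-main}. The only points requiring care are the parity bookkeeping that identifies $G'$ with $G^{R_0}$, and the twin-reduction step needed to produce a bona fide pattern rather than an arbitrary loop-graph so that Theorem~\ref{thm-main} applies verbatim; both are routine. A minor technical nuisance is that the pattern actually arising may have strictly fewer than $2^k$ nodes, so one should invoke Theorem~\ref{thm-main} with $2^k$ viewed as an upper bound on the number of nodes rather than as the exact count.
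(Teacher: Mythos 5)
Your proposal is correct and follows exactly the route the paper takes: the paper derives Corollary~\ref{cor-main} from the one-line observation that a graph obtained by complementing on at most $k$ subsets equals $G^R$ for a pattern $R$ with at most $2^k$ nodes, and then applies Theorem~\ref{thm-main}. Your type/parity bookkeeping and the twin-merging step merely spell out the details of that observation, and your closing remark about $2^k$ being an upper bound rather than the exact node count is a legitimate (and correctly resolved) technicality that the paper glosses over.
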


In relation to the first order model checking,
Corollary~\ref{cor-main} yields the following theorem, which we prove in Section~\ref{sec-FO}.
\begin{theorem}
\label{thm-model}
Let $\GG$ be a graph class with bounded expansion and
let $\GG^k$ be the class containing all graphs that
can obtained from a graph $G\in\GG$ by complementing on at most $k$ subsets of the vertex set of $G$.
For every $k$, the first order model checking is fixed parameter tractable on $\GG^k$.
\end{theorem}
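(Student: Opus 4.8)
The plan is to reduce first order model checking on $\GG^k$ to first order model checking on a suitable colored graph drawn from a class of bounded expansion, for which fixed parameter tractability is already known~\cite{bib-dawar09+,bib-dvorak13+}. Fix a graph $G'\in\GG^k$ given as input, so that $G'$ is obtained from some $G\in\GG$ by complementing on at most $k$ subsets of $V(G)$; as observed before Corollary~\ref{cor-main}, this means $G'=G^R$ for a pattern $R$ with at most $2^k$ nodes together with a partition $(V_u)_{u\in V(R)}$ of $V(G)$. Since $\GG$ has bounded expansion, $G$ is $d$-degenerate for some $d$ depending only on $\GG$. First I would run the algorithm \AAA{} from Corollary~\ref{cor-main} on $G'$ with the parameters $d$ and $k$; in fixed parameter time this outputs a graph $H$ that agrees with $G$ on all but a set $X$ of at most $m=m(d,k)$ vertices, so that $H-X=G-X$.

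The central idea is that the pattern and its partition can be read off from the symmetric difference $D=G'\,\triangle\,H$ of the edge sets, which is computable from $G'$ and $H$ alone. Outside $X$ we have $G'=G^R$ and $H=G$, so the edges of $D$ incident to two vertices of $V(H)\setminus X$ are exactly the pairs whose parts are complemented by $R$; thus $D-X$ is the blow-up of $R$ in which each node $u$ is replaced by $V_u\setminus X$, a loop at $u$ becomes a clique on $V_u\setminus X$, and an edge $uu'$ becomes a complete bipartite graph between the corresponding parts. Because a pattern has no non-trivial induced endomorphism, $R$ has no twins, and hence two vertices of $V(H)\setminus X$ lie in the same part $V_u$ if and only if they are twins in $D-X$. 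Computing the twin classes of $D$ therefore recovers the partition up to the bounded error caused by $X$. I would then color each vertex of $H$ by the node of $R$ whose part contains it, using at most $2^k$ colors, and additionally mark the at most $m$ exceptional vertices individually, recording their adjacencies in $G'$ explicitly; call the resulting vertex coloring $\chi$, which uses at most $2^k+m$ colors.

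With $\chi$ in hand, $G'$ is the image of the colored graph $(H,\chi)$ under a fixed first order interpretation $J$ depending only on $d$ and $k$: two unmarked vertices are adjacent in $G'$ exactly when their adjacency in $H$ is flipped according to whether the pair of their part-colors is complemented by $R$, while adjacencies involving a marked vertex are dictated directly by its recorded colors. Moreover $H$ lies in a class of bounded expansion depending only on $\GG$ and $m$: the induced subgraph $H-X=G-X$ belongs to the hereditary closure of $\GG$, and reintroducing the at most $m$ vertices of $X$ changes the greatest reduced average densities $\nabla_r$ by a bounded amount; adding the bounded number of unary predicates encoded by $\chi$ preserves bounded expansion. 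Given a first order sentence $\varphi$ to be tested on $G'$, I would translate it through $J$ into a sentence $\widetilde\varphi$ over $(H,\chi)$ and decide $\widetilde\varphi$ using the fixed parameter algorithm for bounded expansion classes; since $G'\models\varphi$ if and only if $(H,\chi)\models\widetilde\varphi$, this completes the model check.

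The main obstacle is the robustness of the partition recovery: the vertices of $X$ have arbitrary adjacencies in $D$, so they may fail to be twins of the part containing them or may accidentally become twins across several parts, corrupting the twin classes. The key point to verify is that, because $|X|\le m$ is bounded, every part $V_u$ is still determined by the twin relation once the at most $m$ offending vertices are set aside, and that all ambiguous vertices can be absorbed into the individually marked exceptional set without pushing the number of colors beyond a function of $d$ and $k$. Making this identification precise — in particular arguing that one can efficiently detect which vertices to treat as exceptional, so that the coloring $\chi$ is computable in fixed parameter time — is where the careful counting lies; the remaining steps are the routine composition of first order interpretations with the known bounded expansion model checking algorithm.
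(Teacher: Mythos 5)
Your proposal follows essentially the same route as the paper: run the recovery algorithm, observe that the difference graph $D=G'\triangle H$ equals the output $F$ of Algorithm~\ref{alg} and hence agrees with the blow-up $E^R$ outside a bounded set, read off a bounded partition from the twin classes of $D$, turn it into a bounded vertex coloring of $H$, note that $H$ lives in a bounded expansion class by the apex argument (Proposition~\ref{prop-apex}), and finish with a fixed first order interpretation composed with the known model checking algorithm for colored bounded expansion classes. The one step that does not work as literally written is ``recording the adjacencies in $G'$ of the marked vertices explicitly'' in their colors: a marked vertex can have an arbitrary neighborhood in $G'$, which cannot be encoded by a bounded number of unary predicates on that vertex. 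The clean fix, which is what the paper does and which your own setup already supplies, is to skip the attempt to recover the original partition $(V_u)$ altogether: take the twin classes of $D$ itself as the coloring (there are boundedly many of them, and $D=E^{R_F}$ exactly for the pattern $R_F$ they determine), so that $G'=H^{R_F}$ holds exactly, the irregular adjacencies of the exceptional vertices are carried by $H$ (which remains in the apex class), and the interpretation flips adjacencies uniformly according to the pair of twin-class colors with no special treatment of exceptional vertices.
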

Observe that Proposition~\ref{prop-maxdeg1} implies the following:
if $\GG$ is a class of graphs with bounded maximum degree and
$I$ is a simple first order graph interpretation scheme,
then $I(\GG)\subseteq \GG_D^k$ for some integers $D$ and $k$,
where $\GG_D$ is the class of all graphs with maximum degree at most $D$.
Hence, Theorem~\ref{thm-model} gives an alternative proof of Theorem~\ref{thm-maxdeg}.
On the other hand,
since Proposition~\ref{prop-maxdeg1} does not hold in the setting of graph classes with bounded expansion,
Theorems~\ref{thm-main} and~\ref{thm-model} do not yield an analogous result in this more general setting,
which is concerned by Conjecture~\ref{conj-folklore};
we discuss further details in Section~\ref{sec-concl}.

\section{Preliminaries}
\label{sec-prelim}

In this section, we briefly introduce the notation used throughout the paper, and
present the concepts that we need further.

Graphs considered in this paper are simple, i.e., they do not contain loops or parallel edges
unless stated otherwise.
If $G$ is a graph, then $V(G)$ denotes the set of its vertices.
The {\em neighborhood} of a vertex $v$ in a graph $G$, denoted by $N_G(v)$,
is the set of all vertices adjacent to $v$.
The {\em degree} of a vertex $v$ of a graph $G$ is the size of its neighborhood, and
the {\em relative degree} of $v$ with respect to a subset $X\subseteq V(G)$ is the number of the neighbors of $v$ in $X$.
If $G$ is a graph and $W$ a subset of its vertices,
then the subgraph of $G$ {\em induced} by $W$, denoted by $G[W]$, is the subgraph of $G$
with the vertex set $W$ such that two vertices are adjacent in $G[W]$ if and only if they are adjacent in $G$.
Finally, a graph $G$ is {\em $d$-degenerate}, if its vertices can be ordered in such a way that
each vertex has at most $d$ of its neighbors preceding it.

Let $G$ be a graph. Two vertices $v$ and $v'$ of $G$ are {\em twins}
if every vertex $w$ different from $v$ and $v'$ is adjacent to either both $v$ and $v'$ or none of them.
The binary relation of ``being a twin'' on $V(G)$ is an equivalence relation;
we will call the equivalence classes of this relation {\em twin-classes}.
Note that each twin-class induces either a complete subgraph or an empty subgraph of $G$.

A graph $G'$ is an {\em $r$-shallow minor} of a graph $G$
if it can be obtained from a subgraph of $G$ by contracting vertex-disjoint subgraphs of radii at most $r$ (and removing
arising loops and parallel edges).
We say that a graph class $\GG$ has {\em bounded expansion}
if $\GG$ is monotone, i.e., closed under taking subgraphs, and
there exists a function $f:\NN\to\NN$ such that
the average degree of every $r$-shallow minor of any graph from $G$ is at most $f(r)$.
As we have already mentioned,
examples of classes of graphs with bounded expansion
are classes of graphs with bounded maximum degree and minor-closed classes of graphs.
The latter include classes of graphs with bounded tree-width or graphs embeddable in a fixed surface.

If $G$ is a graph, then a $K$-apex of $G$ is a graph obtained by at adding at most $K$ vertices to $G$ and
joining them to the remaining vertices and between themselves arbitrarily.
The next proposition easily follows from the basic results on classes of graphs with bounded expansion;
see e.g.~\cite[Chapter 5]{bib-nom-book} for further details.

\begin{proposition}
\label{prop-apex}
Let $\GG$ be a class of graph with bounded expansion, and let $K$ be a positive integer.
The class of graphs formed by $K$-apices of graphs from $\GG$ has bounded expansion.
\end{proposition}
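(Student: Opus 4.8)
The plan is to check directly that the class $\HH$ of all $K$-apices of graphs from $\GG$ satisfies the two conditions in the definition of bounded expansion: being closed under subgraphs, and admitting a function bounding the average degree of its shallow minors. The monotonicity of $\HH$ is immediate: if $H$ is a subgraph of a $K$-apex $G^+$ of some $G\in\GG$ obtained by adding a vertex set $A$ with $|A|\le K$, then, writing $A'=A\cap V(H)$, the graph $H-A'$ is a subgraph of $G$ and thus belongs to $\GG$ by monotonicity of $\GG$, while $H$ is visibly a $K$-apex of $H-A'$; hence $H\in\HH$.

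For the shallow-minor bound, let $f$ be the function witnessing bounded expansion of $\GG$ and fix an integer $r$. I would take an arbitrary $r$-shallow minor $M$ of a graph $G^+\in\HH$, realized by contracting vertex-disjoint connected subgraphs of radius at most $r$ (the branch sets) inside a subgraph $S$ of $G^+$. Since each of the at most $K$ apex vertices lies in at most one branch set, at most $K$ vertices of $M$ come from branch sets that meet the apex set $A$; I would call these \emph{apex-containing} and call the remaining vertices \emph{clean}. The edges of $M$ incident to an apex-containing vertex number at most $K\cdot|V(M)|$, since there are at most $K$ such vertices, each of degree below $|V(M)|$. The key step is to argue that the subgraph of $M$ induced by the clean vertices is itself an $r$-shallow minor of $G$: each clean branch set avoids $A$, so it is a connected radius-$\le r$ subgraph of $G$, and every edge of $S$ joining two clean branch sets is an edge of $G$ rather than one created through an apex. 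Consequently this induced subgraph has at most $\frac{f(r)}{2}|V(M)|$ edges.

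Combining the two counts, $M$ has at most $\left(\frac{f(r)}{2}+K\right)|V(M)|$ edges, so its average degree is at most $f(r)+2K$; thus $\HH$ has bounded expansion, witnessed by the function $r\mapsto f(r)+2K$. The only point requiring genuine care — and the step I would expect to be the main obstacle — is the verification that the clean part of $M$ is an honest $r$-shallow minor of $G$: one must check both that branch sets disjoint from $A$ stay connected with radius at most $r$ after forgetting the apices, and that the adjacencies among them in $M$ descend from edges of $G$ and are not artifacts of the added apex edges. Everything else reduces to routine counting.
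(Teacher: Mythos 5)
Your argument is correct. Note that the paper itself gives no proof of this proposition: it is stated as an easy consequence of the basic theory of bounded expansion and dispatched with a citation to Ne\v{s}et\v{r}il and Ossona de Mendez, Chapter 5, so there is no in-paper argument to compare yours against. What you supply is an elementary, self-contained verification working directly from the paper's definition (monotonicity plus a bound on the average degree of $r$-shallow minors), and both halves check out. The monotonicity step is right: a subgraph $H$ of a $K$-apex is a $K$-apex of $H-A'$, which lies in $\GG$ because $\GG$, having bounded expansion, is monotone. The shallow-minor step is also right, and you correctly identify the one point needing care: a branch set disjoint from the apex set $A$ is a connected subgraph of $G^+-A=G$ whose radius is intrinsic and hence unaffected by removing $A$, and any edge of the host subgraph joining two such branch sets has both endpoints outside $A$ and is therefore an edge of $G$; so the clean part of $M$ is an honest $r$-shallow minor of $G$ with at most $\frac{f(r)}{2}|V(M)|$ edges, while the at most $K$ apex-containing vertices of $M$ contribute at most $K|V(M)|$ further edges, giving average degree at most $f(r)+2K$. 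The only thing your write-up buys over the paper's citation is self-containment; the only thing it costs is that the book's machinery (grads and transitivity/composition of shallow minors) yields the statement, and many generalizations of it, with no case analysis at all.
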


Finally,
a \emph{simple first order interpretation scheme} $I$
consists of a pair of formulas $\psi_V(x)$ and $\psi_E(x,y)$.
If $G$ is a graph,
then the graph $I(G)$ has vertex set equal to the set $\{v \in V(G)~|~G \models \psi_V(x)\}$,
i.e., it is the  subset of vertices $x$ of $G$ such that $\psi_V(x)$ holds, and
two vertices $u$ and $v$ of $I(G)$ are adjacent iff $G \models \psi_E(u,v)\vee\psi_E(v,u)$.

\section{Recovering degenerate graphs}

This section is devoted to the proof of Theorem~\ref{thm-main}, one of our two main results.
We need to start with introducing additional notation that will be used in our analysis of complemented graphs.
Let $G$ be a graph.
Two subsets $X$ and $Y$ of the vertex set $V(G)$ are {\em $k$-similar}
if their symmetric difference is at most $k$, i.e., $|X\triangle Y|\le k$.
We say that two vertices of $G$ are {\em $k$-similar} if their neighborhoods are $k$-similar, and
we define the {\em $k$-similarity graph} of $G$ to be the graph with the vertex set $V(G)$
where two vertices are adjacent if they are $k$-similar.

Further fix a pattern $R$ and a partition $(V_u)_{u\in V(R)}$ of $V(G)$.
If $u$ is a node of $R$,
then the {\em $u$-perfect} set is the union of the sets $V_{u'}$ where the union is taken over all neighbors $u'$ of $u$ in $R$.
Note that the $u$-perfect set includes $V_u$ iff $u$ has a loop.
A subset $X$ of the vertex set of $G$ is {\em $(u,k)$-perfect} if it is $k$-similar to the $u$-perfect set, and
a vertex of $G$ is {\em $(u,k)$-perfect} if its neighbors in $G^R$ form a {\em $(u,k)$-perfect} set.
In particular, when saying that a vertex of $G$ is {\em $(u,k)$-perfect},
this always concerns its neighborhood in $G^R$ or in the induced subgraph of $G^R$.

Our goal is to approximately recover graph $G$ from $G^R$
given the size $K$ of $R$ and assuming that $G$ is $d$-degenerate.
We achieve this by finding a partition of $V(G^R)$ that approximates the partition of $(V_u)_{u\in V(R)}$ of $V(G)$.
To find the approximate partition, we use $(u,k)$-perfect vertices introduced above:
if we identify a $(u,k)$-perfect vertex for each class $V_u$ of $(V_u)_{u\in V(R)}$,
then the structure of the neighborhoods of these vertices leads to a good approximation of the partition $(V_u)_{u\in V(R)}$. 
The structural lemmas presented in the next subsection
lead to a simple condition (Lemma~\ref{lm-max-degree})
that allows us to find a $(u,C)$-perfect vertex in the input graph,
where the constant $C$ depends on $d$ and $K$ only.
The presented structural results are then be used to design Algorithm~\ref{alg}, 
which outputs an approximation of the graph $G$. 

\subsection{Structural results}

In this subsection, we present structural results on complemented graphs.
These results will be used in the next subsection to analyze our algorithm.
We start with observing that most vertices of each substantially large part are almost perfect.

\begin{lemma}
\label{lm-in-fract}
Let $R$ be a $K$-node pattern,
$G$ a $d$-degenerate graph with a vertex partition $(V_u)_{u\in V(R)}$, and
$M$ the maximum size of a part $V_u$, $u\in V(R)$.
If a part $V_u$, $u\in V(R)$, contains at least $\frac{M}{4K}$ vertices,
then it contains at least $\left(1-\frac{1}{10K}\right)|V_u|$ vertices that are $(u,80dK^3)$-perfect.
\end{lemma}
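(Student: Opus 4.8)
The plan is to reduce the lemma to an elementary counting argument about degrees in the degenerate graph $G$ itself. The key observation is that for a vertex $v \in V_u$, the complementation defining $G^R$ flips exactly the adjacencies of $v$ to the $u$-perfect set. Writing $P_u$ for the $u$-perfect set, one checks directly from the definition of $G^R$ (complement inside $V_u$ when $u$ has a loop, and between $V_u$ and $V_{u'}$ when $uu'\in E(R)$) that $N_{G^R}(v) \triangle P_u = N_G(v)$; the only possible discrepancy is the vertex $v$ itself, which lies in $P_u$ precisely when $u$ carries a loop and so contributes at most $1$. Consequently a vertex $v \in V_u$ that is \emph{not} $(u,80dK^3)$-perfect must satisfy $|N_{G^R}(v)\triangle P_u| > 80dK^3$, and hence $\deg_G(v) \ge 80dK^3$. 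In other words, the non-perfect vertices of $V_u$ are exactly the vertices of large degree in $G$, so it suffices to bound how many of these can occur.

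First I would bound the order of $G$. Since the partition $(V_u)_{u\in V(R)}$ has $K$ parts, each of size at most $M$, the graph $G$ has at most $KM$ vertices. As $G$ is $d$-degenerate it has at most $d\,|V(G)|$ edges, so its degree sum satisfies $\sum_{v\in V(G)} \deg_G(v) = 2|E(G)| \le 2dKM$.

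Now let $b$ denote the number of vertices of $V_u$ that are not $(u,80dK^3)$-perfect. By the opening observation each such vertex has $G$-degree at least $80dK^3$, so these $b$ vertices alone contribute at least $80bdK^3$ to the degree sum. Combining with the bound above gives $80bdK^3 \le 2dKM$, that is, $b \le \frac{M}{40K^2}$. Finally, the hypothesis $|V_u| \ge \frac{M}{4K}$ yields $\frac{M}{40K^2} \le \frac{|V_u|}{10K}$, whence $b \le \frac{|V_u|}{10K}$; this is exactly the assertion that at least $\left(1-\frac{1}{10K}\right)|V_u|$ vertices of $V_u$ are $(u,80dK^3)$-perfect.

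The only genuinely nontrivial point is the opening translation of ``$(u,k)$-perfect'' into ``small degree in $G$''; once this is in place the three constants $80dK^3$, $\frac{M}{4K}$ and $\frac{1}{10K}$ are engineered so that the degeneracy budget $2dKM$ matches the double-counting bound exactly, and no further estimation is required. I expect that rewriting $G^R$-quantities in terms of the corresponding $G$-quantities together with $P_u$ will be the recurring technical device used to prove the remaining structural lemmas.
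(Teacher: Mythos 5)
Your proposal is correct and follows essentially the same route as the paper: both rest on the observation that a vertex of $V_u$ is $(u,80dK^3)$-perfect exactly when its degree in $G$ is small, and then bound the number of high-degree vertices via the degree-sum estimate $2|E(G)|\le 2dKM$ coming from $d$-degeneracy and $|V(G)|\le KM$. The only cosmetic difference is that you bound the count of bad vertices directly while the paper argues by contradiction; your explicit justification of the degree translation (including the $\pm 1$ from the loop case) is a welcome bit of extra care.
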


\begin{proof}
Fix a node $u$ such that the size of the part $V_u$ is at least $\frac{M}{4K}$, and
observe that a vertex $v$ of $V_u$ is $(u,80dK^3)$-perfect if and only if
its degree in $G$ is at most $80dK^3$.
Hence, we need to show that at least $\left(1-\frac{1}{10K}\right)|V_u|$ vertices of $V_u$
have degree at most $80dK^3$.

Suppose that more than $\frac{1}{10K}|V_u|$ vertices of $V_u$ have degree strictly larger than $80dK^3$.
This implies that the sum of the degrees of the vertices of $V_u$ is strictly larger than 
$$8dK^2|V_u|\ge 2dKM\;\mbox{.}$$
This is impossible since $G$ contains at most $dn\le dKM$ edges in total and
thus the sum of the degrees of all vertices of $G$ is at most $2dKM$.
The statement of the lemma now follows.
\end{proof}

The next lemma shows that almost all vertices with similar neighborhoods must belong to the same part.

\begin{lemma}
\label{lm-out-const}
Let $R$ be a $K$-node pattern and
$G$ a $d$-degenerate graph with a vertex partition $(V_u)_{u\in V(R)}$
such that each $V_u$ contains at least $330dK^3$ vertices.
For every $W\subseteq V(G)$,
there exists a node $u\in V(R)$ such that 
all but at most $330dK^4$ vertices with their neighborhoods $(160dK^3)$-similar to $W$ in $G^R$ belong to $V_u$.
\end{lemma}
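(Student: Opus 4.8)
The plan is to choose the node $u$ canonically from $W$ and then bound the number of ``misclassified'' vertices by a degeneracy counting argument. For a node $u$ write $P_u=\bigcup_{uu'\in E(R)}V_{u'}$ for the $u$-perfect set. Two facts drive everything. First, for $v\in V_{u'}$ the neighborhood $N_{G^R}(v)$ differs from $P_{u'}$ only in the $G$-neighbors of $v$, so $|N_{G^R}(v)\triangle P_{u'}|\le \deg_G(v)+1$. Second, and crucially, \emph{distinct perfect sets are far apart}: I claim $|P_u\triangle P_{u'}|\ge 330dK^3$ for all $u\neq u'$. Indeed $P_u\triangle P_{u'}=\bigcup_{w\in N_R(u)\triangle N_R(u')}V_w$ by disjointness of the parts, where $N_R(u)$ is the set of $w$ with $uw\in E(R)$ (a loop at $u$ counting as $u\in N_R(u)$); so it suffices to show $N_R(u)\neq N_R(u')$. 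If they were equal, then $u,u'$ would be twins and, inspecting membership of $u$ and $u'$ themselves, one finds that $u,u'$ are either adjacent with both having loops or non-adjacent with neither having a loop --- exactly the two configurations a pattern forbids. Hence $N_R(u)\triangle N_R(u')\neq\emptyset$, it contains some node $w$, and $|V_w|\ge 330dK^3$ by hypothesis. This is the only place where the pattern property and the lower bound on part sizes enter, and the constant $330$ is chosen so that $2\cdot 160<330$.

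Now let $S=\{v:|N_{G^R}(v)\triangle W|\le 160dK^3\}$ and pick $u$ minimizing $\delta:=|W\triangle P_u|$ over all nodes; I want to bound $|S\setminus V_u|$ by a constant depending only on $d$ and $K$. The engine is the standard degeneracy estimate: in a $d$-degenerate graph every subgraph on a vertex set $Z$ has at most $d|Z|$ edges, so if every vertex of a set $A$ has at least $t$ neighbors in a set $B$, then $t|A|\le 2d(|A|+|B|)$, whence $|A|\le 2d|B|/(t-2d)$ as soon as $t>2d$.

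The point is to exhibit, for each $u'\neq u$, a large target set $B$ almost entirely contained in the $G$-neighborhood of every $v\in S\cap V_{u'}$. Using $N_G(v)=N_{G^R}(v)\triangle P_{u'}$ (up to the single vertex $v$) with the triangle inequality for $\triangle$, I get $|N_G(v)\triangle(P_u\triangle P_{u'})|\le 160dK^3+\delta+1$ and $|N_G(v)\triangle(W\triangle P_{u'})|\le 160dK^3+1$. I would then split on $\delta$. If $\delta\le 165dK^3$, take $B=P_u\triangle P_{u'}$, of size $\ge 330dK^3$, which $N_G(v)$ misses in at most $\approx 325dK^3$ elements. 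If $\delta>165dK^3$, take $B=W\triangle P_{u'}$, of size $\ge\delta>165dK^3$, which $N_G(v)$ misses in at most $160dK^3+1$ elements (this regime also controls $S\cap V_u$ itself). In both regimes $t\ge|B|-\mu$ for a fixed miss $\mu=O(dK^3)$ that stays strictly below the size lower bound on $|B|$, so the estimate $|S\cap V_{u'}|\le 2d|B|/(|B|-\mu-2d)$ --- a decreasing function of $|B|$ --- is maximized at the smallest admissible $|B|$ and evaluates to $O(d)$. Summing this per-part bound over the at most $K$ choices of $u'$ gives $|S\setminus V_u|=O(dK)$, comfortably below $330dK^4$.

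The main obstacle, and the reason this is harder than the bounded-degree analogue, is that we cannot simply discard high-degree vertices: a $d$-degenerate graph may have a constant \emph{fraction} of $M$ vertices of large degree, so a crude sum-of-degrees count would only bound the exceptions by $\Theta(M/K^2)$, never by a constant. The resolution is the observation above that a vertex of $S$ lying in the wrong part cannot merely have large degree --- its \emph{entire} neighborhood must nearly coincide with a fixed union of parts of size $\ge 330dK^3$, and degeneracy permits only $O(d)$ vertices whose neighborhoods nearly contain one common large set. Care is needed only in choosing the split point for $\delta$ so that in each regime the target set $B$ is large relative to the bounded number of its vertices that $v$ may miss; the value $165dK^3$ works precisely because it lies strictly between $160dK^3$ and $170dK^3$.
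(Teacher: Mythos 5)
Your proof is correct, and it takes a genuinely different route from the paper's. The paper argues by contradiction: if two parts $V_u$ and $V_{u'}$ each contained $330dK^3$ vertices whose neighborhoods are similar to $W$, it extracts equal-sized subsets $A\subseteq V_u$ and $A'\subseteq V_{u'}$ of pairwise $(320dK^3)$-similar vertices and then, in three cases according to the loops at $u,u'$ and the presence of the edge $uu'$, finds a test set (either $A$ itself or a third part $V_{u''}$ distinguishing $u$ from $u'$) against which the sums of relative degrees of $A$ and of $A'$ in $G^R$ must differ by more than similarity permits. You instead argue directly: you pick $u$ minimizing $|W\triangle P_u|$, prove once that distinct perfect sets satisfy $|P_u\triangle P_{u'}|\ge 330dK^3$ (your twin analysis compresses the paper's three-way case split into the single observation that $N_R(u)\ne N_R(u')$, so a whole part lies in the symmetric difference --- the distinguishing node $w$ plays exactly the role of the paper's $u''$), and then bound each $S\cap V_{u'}$ by the fact that a $d$-degenerate graph contains only $O(d)$ vertices whose $G$-neighborhoods each nearly contain a fixed set of size at least $165dK^3$. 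I checked the arithmetic of your two regimes: in the first, $|B|-\mu-2d\ge 5dK^3-2d-1>0$ and the per-part bound is at most $330d$; in the second it is at most about $165d$; summing over at most $K$ parts gives $O(dK)\le 330dK^4$. Both proofs ultimately rest on the same two ingredients (twin-freeness of the pattern and the edge bound $e(G[Z])\le d|Z|$), but your packaging unifies the case analysis, makes the choice of $u$ canonical, and yields a sharper quantitative bound; the paper's version avoids the minimality/two-regime bookkeeping at the cost of the explicit case split. One cosmetic remark: in your closing commentary, a $d$-degenerate graph has at most a $2d/t$ fraction of vertices of degree at least $t$, not a constant fraction; your substantive point (that degree counting alone only bounds the exceptions by $\Theta(M/K^2)$ rather than by a constant) is nevertheless right.
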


\begin{proof}
Suppose that the statement is false and fix a set $W$ that violates the statement.
This implies that there are two different nodes $u$ and $u'$ such that
each of the sets $V_u$ and $V_{u'}$ contains at least $330dK^3$ vertices with $(160dK^3)$-similar to $W$ in $G^R$.
Indeed, take a node $u\in V(R)$ such that $V_u$ contains the largest number of vertices
with their neighborhoods $(160dK^3)$-similar to $W$ in $G^R$;
note that $V_u$ contains at least $330dK^3$ such vertices (otherwise,
any node $u$ would satisfy the statement of the lemma
since there would be at most $330dK^4$ such vertices in total).
Since the set $W$ violates the statement,
there are at least $330dK^4$ vertices with their neighborhoods $(160dK^3)$-similar to $W$ in $G^R$
that do not belong to $V_{u}$.
This implies that there exists a node $u'\in V(R)$ such that $V_{u'}$ also contains at least $330dK^3$ such vertices.

To simplify our notation, fix $n$ to be $330dK^3$.
Choose an $n$-vertex subset $A$ of $V_u$ such that their neighborhoods are $(160dK^3)$-similar to $W$ and
an $n$-vertex subset $A'$ of $V_{u'}$ such that their neighborhoods are $(160dK^3)$-similar to $W$.
Observe that any two vertices in $A\cup A'$ are $(320dK^3)$-similar.

We next distinguish three cases based on whether the nodes $u$ and $u'$ have loops in $R$ and
whether they are adjacent in $R$.
\begin{itemize}
\item {\bf At least one of the two nodes, say $u$, has a loop, and $R$ does not contain the edge $uu'$.}\\
      The subgraph $G[A\cup A']$ contains at most $2dn$ edges,
      which yields that the sum of the degrees of the vertices of $G[A\cup A']$ is at most $4dn$.
      We next compare relative degrees of the vertices of $A\cup A'$ with respect to $A$ in $G^R$.
      Since the neighbors of the vertices of $A'$ in $A$ are the same in $G[A\cup A']$ and in $G^R[A\cup A']$,
      the sum of the relative degrees of the vertices of $A'$ with respect to $A$ is at most $4dn$.
      On the other hand, the sum of the relative degrees of the vertices of $A$ in $G^R[A\cup A']$ is at least $n(n-1)-4dn$.
      Since any two vertices in $A\cup A'$ are $(320dK^3)$-similar in $G^R$ and thus in $G^R[A\cup A']$,
      their relative degrees in $G^R$ with respect to $A$ differ by at most $320dK^3$.
      Consequently, the sums of the relative degrees of the vertices of $A$ and those of $A'$ with respect to $A$ in $G^R$
      can differ by at most $320dK^3n$.
      However, the difference of these two sums is at least
      $$n(n-1)-8dn=n(n-1-8d)\ge n (330dK^3-1-8d)\ge 321dK^3n >320dK^3n\;\mbox{.}$$
\item {\bf At least one of the two nodes, say $u$, does not have a loop, and $R$ contains the edge $uu'$.}\\
      An analogous argument to that used in the first case yields that
      the sum of the relative degrees of the vertices of $A$ with respect to $A$ in $G^R$ is at most $4dn$ and
      the sum of the relative degrees of the vertices of $A'$ with respect to $A$ in $G^R$ is at least $n^2-4dn$.
      Consequently, the difference of these two sums is at least $n^2-8dn>320dK^3n$
      while it cannot exceed $320dK^3n$.
\item {\bf The nodes $u$ and $u'$ either both have loops and are adjacent or
           both do not have a loop and are non-adjacent in $R$.}\\
      Since $R$ is a pattern, there must exist a node $u''$, which is different from $u$ and $u'$, such that
      either $uu''$ is not an edge and $u'u''$ is an edge, or vice versa.
      By symmetry, we can assume the former to be the case.
      Let $A''$ be a set of $n$ vertices contained in $V_{u''}$.
      The number of edges between $A$ and $A''$ in $G$ is at most $2dn$.
      Hence, the sum of the relative degrees of the vertices of $A$ with respect to $A''$ is at most $2dn$
      both in $G$ and in $G^R$.
      On the other hand,
      the sum of the relative degrees of the vertices of $A'$ with respect to $A''$ is at most $2dn$ in $G$, and
      thus at least $n^2-2dn$ in $G^R$.
      Since any two vertices of $A\cup A'$ are $(320dK^3)$-similar,
      their relative degrees with respect to $A''$ in $G^R$ can differ by at most $320dK^3$.
      Consequently, the sums of the relative degrees of the vertices of $A$ and $A'$ can differ by at most $320dK^3n$.
      However, the difference of the two sums is at least $n^2-4dn>320dK^3n$.
\end{itemize}
In each of the three cases, we have obtained a contradiction, which concludes the proof of the lemma.
\end{proof}

To prove the next lemma, we need to introduce some additional notation.
Let $G$ be a graph, $R$ a pattern, $(V_u)_{u\in V(R)}$ a partition of $V(G)$, and $U$ a subset of the nodes of $R$.
The graph $R\setminus U$ need not be a pattern
but there is a unique pattern to that $R\setminus U$ has an induced homomorphism.
This pattern can be obtained as follows.
Let $R'$ be $R\setminus U$.
As long as $R'$ contains either two adjacent twins $u$ and $u'$ that both have loops or
two non-adjacent twins $u$ and $u'$ that none of them has a loop,
identify the nodes $u$ and $u'$ and merge the parts $V_u$ and $V_{u'}$.
The resulting pattern $R_0$ is called the {\em reduction} of $R\setminus U$;
the reduction $R_0$ is uniquely determined by the pattern $R$ and the set $U$.
If $W$ is the union of $V_u$ with $u\not\in U$,
then the new parts $V_u$ indexed by $u\in V(R_0)$ form a partition of the vertex set $G[W]$.
This partition is called the {\em reduced partition} and
it is easy to observe that the graphs $G^R[W]$ and $G[W]^{R_0}$ are the same.

\begin{lemma}
\label{lm-max-degree}
Let $R$ be a $K$-node pattern and
$G$ a $d$-degenerate graph with a vertex partition $(V_u)_{u\in V(R)}$.
If $G$ has at least $1100dK^5$ vertices,
then the vertex of the maximum degree in the $(160dK^3)$-similarity graph of $G^R$
is $(u,570dK^4)$-perfect for some $u\in V(R)$.
\end{lemma}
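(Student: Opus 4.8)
The plan is to reduce the statement to finding, among the neighbors of the maximum-degree vertex of the $(160dK^3)$-similarity graph, a single vertex whose neighborhood in $G^R$ is already close to a perfect set. Write $k=160dK^3$, let $v^\ast$ be a vertex of maximum degree in the $k$-similarity graph of $G^R$, and for a node $u$ write $P_u$ for the $u$-perfect set. Recall from the proof of Lemma~\ref{lm-in-fract} that for $v\in V_u$ one has $N_{G^R}(v)\triangle P_u=N_G(v)$ (up to the single vertex $v$ itself, which is immaterial for the constants), so that $v$ is $(u,c)$-perfect precisely when $\deg_G(v)\le c$. Consequently it suffices to exhibit one neighbor $a$ of $v^\ast$ in the $k$-similarity graph with $\deg_G(a)\le 80dK^3$: if $a\in V_{u_a}$, then, since the symmetric difference is a metric,
$$|N_{G^R}(v^\ast)\triangle P_{u_a}|\le |N_{G^R}(v^\ast)\triangle N_{G^R}(a)|+|N_{G^R}(a)\triangle P_{u_a}|\le 160dK^3+80dK^3=240dK^3\le 570dK^4,$$
so that $v^\ast$ is $(u_a,570dK^4)$-perfect, as required.

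To produce such a vertex $a$ I would bound two quantities. First, the degree of $v^\ast$ in the $k$-similarity graph is large: since $G$ has $n\ge 1100dK^5$ vertices partitioned into at most $K$ parts, the largest part $V_{u_0}$ has size $M\ge n/K\ge 1100dK^4$, and Lemma~\ref{lm-in-fract} (which applies as $M\ge\frac{M}{4K}$) yields at least $\left(1-\frac{1}{10K}\right)M$ vertices of $V_{u_0}$ that are $(u_0,80dK^3)$-perfect. The neighborhoods of any two of these lie within $80dK^3$ of the common set $P_{u_0}$, so any two are $k$-similar; hence each of them, and therefore also $v^\ast$, has at least $\left(1-\frac{1}{10K}\right)M-1$ neighbors in the $k$-similarity graph. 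Second, the number of vertices that fail to be perfect is small: as $G$ is $d$-degenerate it has at most $dn$ edges, so strictly fewer than $\frac{2dn}{80dK^3}=\frac{n}{40K^3}$ vertices have degree exceeding $80dK^3$.

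Finally I would compare these two bounds. Using $M\ge n/K$ one checks that $\left(1-\frac{1}{10K}\right)M-1$ strictly exceeds $\frac{n}{40K^3}$ for every $K\ge 1$ as soon as $n$ is at least a small multiple of $K$, which is amply guaranteed by $n\ge 1100dK^5$. Thus the neighborhood of $v^\ast$ in the $k$-similarity graph, having at least $\left(1-\frac{1}{10K}\right)M-1$ elements, cannot be contained in the fewer than $\frac{n}{40K^3}$ non-perfect vertices, and so contains a vertex $a$ with $\deg_G(a)\le 80dK^3$; the triangle inequality of the first paragraph then completes the argument. The delicate point is precisely this final comparison: one must ensure that the lower bound on the similarity-degree of $v^\ast$, which is driven by the size $M$ of the \emph{largest} part, dominates the \emph{global} count of high-degree vertices. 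This is exactly where the hypothesis on the number of vertices of $G$ enters, and it also explains why a single large part already suffices and Lemma~\ref{lm-out-const} is not needed for this particular statement.
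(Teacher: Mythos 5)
Your proof is correct, and it takes a genuinely different and noticeably shorter route than the paper's. Both arguments begin the same way: apply Lemma~\ref{lm-in-fract} to the largest part to show that the maximum degree of the $(160dK^3)$-similarity graph is at least $\left(1-\frac{1}{10K}\right)M-1\ge \frac{9M}{10}-1$ with $M\ge n/K$. From there the paper localizes: it strips away the parts of size at most $330dK^3$, passes to the reduced pattern, invokes Lemma~\ref{lm-out-const} to show that all but $O(dK^4)$ of the similarity-neighbors of $w$ lie in one part $V_{0,u_0}$, and then applies Lemma~\ref{lm-in-fract} again inside that part to find a perfect similarity-neighbor. You instead exploit the identity $N_{G^R}(v)\triangle P_{u}=N_G(v)$ (up to $v$ itself) for $v\in V_u$ --- the same fact the paper uses inside Lemma~\ref{lm-in-fract} --- to observe that \emph{every} vertex of $G$-degree at most $80dK^3$ is already $(u,80dK^3+1)$-perfect for its \emph{own} part, whichever that is; since $d$-degeneracy caps the number of exceptions globally at $n/(40K^3)$, which is dominated by the similarity-degree $\approx 9n/(10K)$ of $v^\ast$, a low-degree similarity-neighbor $a$ exists and the triangle inequality finishes with the bound $240dK^3+1\le 570dK^4$. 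Your numerical comparisons check out ($\frac{9n}{10K}-\frac{n}{40K^3}\ge\frac{7n}{8K}>1$ under the hypothesis $n\ge 1100dK^5$). What the two approaches buy: the paper's argument additionally identifies a specific part containing almost all of $W_s$, but neither the lemma's statement nor its use in Lemma~\ref{lm-deletion} needs that extra information, whereas your argument dispenses with Lemma~\ref{lm-out-const} and the reduction machinery entirely for this lemma and gives a slightly better constant. The one point worth stating explicitly if you write this up is the off-by-one from the possible loop at $u$ (i.e., $|N_{G^R}(v)\triangle P_u|\le\deg_G(v)+1$), which you flag and which is indeed absorbed by the constants.
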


\begin{proof}
Let $u_M$ be the node of $R$ such that $V_{u_M}$ is the largest part of the  partition $(V_u)_{u\in V(R)}$ and
let $M$ be its size, i.e., $M=|V_{u_M}|$.
Observe that $M\ge 1100dK^4$.
By Lemma~\ref{lm-in-fract}, $V_{u_M}$ contains at least $\left(1-\frac{1}{10K}\right)M\ge \frac{9M}{10}$ vertices that
are $(u_M,80dK^3)$-perfect.
All these vertices are mutually adjacent in the $(160dK^3)$-similarity graph of $G^R$,
which implies that the maximum degree of the $(160dK^3)$-similarity graph of $G^R$ is at least $9M/10-1$.
Let $w$ be the vertex of the maximum degree in the $(160dK^3)$-similarity graph of $G^R$,
$W$ the neighborhood of $w$ in $G^R$, and
$W_s$ the neighborhood of $w$ in the $(160dK^3)$-similarity graph.
Note that $|W_s|\ge 9M/10-1$ and each vertex of $W_s$ is $(160dK^3)$-similar to $w$ in $G^R$.

Let $U'$ be the set of the nodes $u\in V(R)$ such that $|V_u|\le 330dK^3$, and
let $V'$ be the union of the parts $V_u$ with $u\in U'$.
Observe that $|V'|\le 330dK^4$.
Let $R_0$ be the reduction of $R\setminus U'$, and
let $G_0$ be the graph $G\setminus V'$ with the reduced partition $V_{0,u}$, $u\in V(R_0)$.
Observe that $G_0^{R_0}=G^R\setminus V'$ and
each part $V_{0,u}$, $u\in V(R_0)$, has at least $330dK^3$ vertices.
Further, let $W_0=W\setminus V'$, and
note that $W_0$ is the neighborhood of $w$ in $G_0^{R_0}$ and
that each vertex of $W_s\setminus V'$ is $(160dK^3)$-similar to $w$ in $G_0^{R_0}$.

We now apply Lemma~\ref{lm-out-const} to the graph $G_0$ with the pattern $R_0$ and the set $W_0$.
The lemma implies that
there exists a node $u_0$ of $R_0$ such that
there are at most $330dK^4$ vertices outside $V_{0,u_0}$ with their neighborhood $(160dK^3)$-similar to $W_0$ in $G_0^{R_0}$.
Hence, the set $W_s\subseteq V(G)$ contains at most $660dK^4$ vertices that are not contained in $V_{0,u_0}$:
all such vertices are contained in $V'$ or are $(160dK^3)$-similar to $w$ in $G_0^{R_0}$.
It follows that the part $V_{0,u_0}$ contains at least $9M/10-1-660dK^4\ge 9M/10-661dK^4\ge M/4$ vertices of $W_s$.
In particular, the part $V_{0,u_0}$ contains at least $M/4$ vertices in total.

By Lemma~\ref{lm-in-fract},
the part $V_{0,u_0}$ contains at least $\left(1-\frac{1}{10K}\right)|V_{0,u_0}|$ vertices
that are $(u_0,80dK^3)$-perfect with respect to the graph $G_0$ and the pattern $R_0$,
i.e., there are at most $\frac{|V_{0,u_0}|}{10K}\le M/10$ vertices of $V_{0,u_0}$ that
are not $(u_0,80dK^3)$-perfect.
Hence, there is a vertex $v$ that is contained in $W_s\cap V_{0,u_0}$ and
that is $(u_0,80dK^3)$-perfect with respect to the graph $G_0$ and the pattern $R_0$.

Since the vertex $v$ is $(u_0,80dK^3)$-perfect with respect to the graph $G_0$ and the pattern $R_0$,
there exists a node $u\in V(R)$ such that
the vertex $v$ is $(u,80dK^3+|V'|)$-perfect with respect to the graph $G$ and the pattern $R$,
i.e., $v$ is $(u,80dK^3+330dK^4)$-perfect.
Since the vertex $v$ is contained in $W_s$, i.e., it is a neighbor of $w$ in the $(160dK^3)$-similarity graph,
we get that the vertex $w$ is $(u,240dK^3+330dK^4)$-perfect.
Since $240dK^3+330dK^4\le 570dK^4$, the lemma now follows.
\end{proof}

\subsection{Algorithm}\label{ss:alg}

We are now ready to present an algorithm that
can be used to recover the original $d$-degenerate graph $G$ from the graph $G^R$
where $R$ is an a priori unknown $K$-pattern.
The algorithm is given as Algorithm~\ref{alg}.
The algorithm takes the graph $G^R$ as an input and outputs a graph $F$ that
differs from the perfect blow-up $E^R$ of the pattern $R$ only on constantly many vertices,
where $E$ is the graph with the vertex set $V(G)$ and no edges.
Algorithm~\ref{alg} is analyzed in the next lemma.

\begin{algorithm}
{\bf Input:} a graph $H$, integers $d$ and $K$\\
{\bf Output:} a graph $F$ on the vertex set $V(H)$\\
$W:=V(H)$\;
$F:=$empty graph on the vertex set $V(H)$\;
$k:=0$\;
$\SSS:=\emptyset$\;
\While{$|W|\ge 1100dK^5$}{
  \eIf{$\exists\;v\in W$ s.t. $N_{H[W]}(v)$ is $(1140dK^4)$-similar to a set $S_i\cap W$, $S_i\in\SSS$}{
    $W:=W\setminus\{v\}$\;
    join $v$ in $F$ to all the vertices of $S_i\cap W$\;
    }{
    $v:=$max.~degree vertex in the $(160dK^3)$-similarity graph of $H[W]$\;
    $k:=k+1$\;
    $S_k:=$the neighbors of $v$ in $H[W]$\;
    add $S_k$ to $\SSS$\;
    }
  }
output $F$.\\
\caption{Algorithm producing an approximation of the perfect blow-up of an unknown $K$-node pattern.}
\label{alg}
\end{algorithm}

\begin{lemma}
\label{lm-deletion}
Let $R$ be a $K$-node pattern and
$G$ a $d$-degenerate graph with a vertex partition $(V_u)_{u\in V(R)}$.
Suppose that Algorithm~\ref{alg} is applied for $H=G^R$, $d$ and $K$, and
the algorithm outputs a graph $F$.
There exists a subset $U$ of at most $4000dK^6$ vertices of $H$ such that
the graph $F\setminus U$ and $E^R\setminus U$ are the same,
where $E$ is the empty graph with the vertex set $V(H)$.
\end{lemma}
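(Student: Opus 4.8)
The plan is to track how Algorithm~\ref{alg} builds its collection $\SSS=\{S_1,\dots,S_k\}$ and then to show that, up to a bounded set $U$, the graph $F$ it outputs coincides with the perfect blow-up $E^R$. First I would record the meaning of each $S_i$. Inside the while-loop we always have $|W|\ge 1100dK^5$, so Lemma~\ref{lm-max-degree} applies to the induced graph $G[W]$ (passing to the reduction of the pattern on the parts still meeting $W$, which only lowers the relevant constants). Thus the vertex $v_i$ chosen in the else-branch is $(u_i,570dK^4)$-perfect for some node $u_i$, and the recorded set $S_i=N_{G^R[W_i]}(v_i)$ is $570dK^4$-close to $P_{u_i}\cap W_i$, where $W_i$ is the value of $W$ at creation and $P_{u_i}$ is the $u_i$-perfect set. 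I would then bound the number of created sets by $K$: if two sets $S_i,S_j$ ($i<j$) had $P_{u_i}=P_{u_j}$ on $W_j$, then since $S_i\cap W_j$ and $S_j$ are each $570dK^4$-close to $P_{u_i}\cap W_j$ they are $1140dK^4$-close, so $v_j$ would have been removed through the if-branch against $S_i$ rather than spawning $S_j$; hence $u_1,\dots,u_k$ have pairwise distinct perfect sets and $k\le K$.

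Next I would rewrite the edges of $F$. Since a removed vertex $v$ is joined to $S_{i(v)}\cap W_v$, with $W_v$ the removal-time value of $W$, an edge $vv'$ with $v$ removed before $v'$ lies in $F$ exactly when $v'\in S_{i(v)}$, while $vv'\in E^R$ exactly when $v'\in P_{u(v)}$, where $u(v)$ denotes the node with $v\in V_{u(v)}$. Writing $\mathcal{E}_i=S_i\triangle P_{u_i}$ and $D_v=P_{u_{i(v)}}\triangle P_{u(v)}$, the endpoints $v'$ on which $F$ and $E^R$ disagree with such a $v$ lie in $(\mathcal{E}_{i(v)}\triangle D_v)\cap W_v$. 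The crucial identity is $N_{G^R[W_v]}(v)\triangle(P_{u(v)}\cap W_v)=N_G(v)\cap W_v$ (complementing changes $v$'s adjacency to $w$ precisely when $vw\in E(G)$); combined with the matching test this shows $D_v\cap W_v$ is $1710dK^4$-close to $N_G(v)\cap W_v$. In particular, when $v$ is matched \emph{correctly}, i.e.\ $P_{u_{i(v)}}=P_{u(v)}$ so that $D_v=\emptyset$, the only possible disagreements lie in the fixed set $\mathcal{E}_{i(v)}$ of size $\le 570dK^4$.

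I would then assemble $U$ and verify that it absorbs the disagreements. Put into $U$: the fewer than $1100dK^5$ vertices left in $W$ at termination; the union $\bigcup_i\mathcal{E}_i$ of error sets, of total size $\le 570dK^5$; and, for each part $V_u$, the set $Y_u$ of its vertices still present at the first moment $|V_u\cap W|$ drops below $\theta:=1790dK^4$, so $|Y_u|<\theta$ and $\bigcup_u Y_u$ has size $\le 1800dK^5$. The error sets kill all disagreements coming from correct matches. For an \emph{incorrect} match with $|D_v\cap W_v|\le\theta$, every part contained in $D_v$ is already below $\theta$ in $W_v$, hence at or past its depletion time, so each disagreement endpoint in $D_v\cap W_v$ lies in the corresponding $Y$-set and thus in $U$; these matches therefore create no disagreement outside $U$ either. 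What remains are the \emph{badly matched} vertices $B$, those matched incorrectly with $|D_v\cap W_v|>\theta$; for these I would place $v$ itself into $U$, so it only remains to bound $|B|$.

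Bounding $|B|$ is the main obstacle. For $v\in B$ the closeness of $D_v\cap W_v$ to $N_G(v)\cap W_v$ forces $v$ to have more than $\theta-1710dK^4>80dK^4$ neighbors of $G$ inside $D_v\cap W_v$, hence more than $80dK^3$ inside some single part $V_{u''}\subseteq D_v$; the raw $d$-degeneracy count of such vertices is only $O(n/K^3)$, which is not good enough. To get a bound independent of $n$ I would exploit the processing order: sets are spawned from the largest surviving part, so the correct set $S_c$ for $P_{u(v)}$ is created while $V_{u(v)}$ is still large, and from that moment a vertex $w\in V_{u(v)}$ becomes $1140dK^4$-close to $S_c$ as soon as its residual degree $|N_G(w)\cap W|$ falls below $570dK^4$; thus a vertex can be badly matched only during the bounded window before its correct set appears and while its residual degree is still high. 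Charging each badly matched vertex to an edge of $G$ incident with the not-yet-depleted parts of $D_v$ during this window, and using that at most $K$ such windows occur, I expect a bound of the form $|B|\le O(dK^6)$, giving $|U|\le 1100dK^5+570dK^5+1800dK^5+O(dK^6)\le 4000dK^6$. The delicate point—and the reason $d$-degeneracy needs the ``finer analysis'' promised in the introduction—is precisely to make this window argument yield a constant rather than a quantity growing with $n$.
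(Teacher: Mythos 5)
Your setup matches the paper's in spirit: you correctly identify each $S_i$ as $(u_i,570dK^4)$-perfect via Lemma~\ref{lm-max-degree}, deduce $k\le K$, reduce the disagreements between $F$ and $E^R$ to the error sets $S_i\triangle(P_{u_i}\cap W_i)$ plus the sets $D_v$, and correctly absorb the terminal set $W_0$, the error sets, and the tails of the parts into $U$. The identity $N_{G^R[W_v]}(v)\triangle(P_{u(v)}\cap W_v)=N_G(v)\cap W_v$ and the resulting closeness of $D_v\cap W_v$ to $N_G(v)\cap W_v$ are also right. But the proof is not complete: the decisive step, bounding the number of badly matched vertices $B$ by a quantity independent of $n$, is only conjectured, and the window/charging argument you sketch does not deliver it. The obstruction is exactly the one you name: each $v\in B$ is charged to more than $80dK^4$ edges of $G$ into the not-yet-depleted parts of $D_v$, but those parts may contain $\Omega(n)$ vertices and the ``window'' may see $\Omega(n)$ removals, so global degeneracy only yields $|B|=O(n/K^4)$. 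Moreover the premise of the window argument is shaky: nothing guarantees that a set $S_c$ with $P_{u_c}=P_{u(v)}$ is ever created (vertices of $V_{u(v)}$ may keep being absorbed by other sets), and even when such an $S_c$ exists, a low-residual-degree vertex may legitimately match a different $S_i$ with $|(P_{u_i}\triangle P_{u_c})\cap W|$ up to about $3420dK^4>\theta$, so it would still count as badly matched under your threshold.

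The paper closes this gap with a local counting device you almost have in hand. For each node $u'$ it fixes $V'_{u'}$, the \emph{last} $1143dK^4$ vertices of $V_{u'}\setminus T$ to be removed (your $Y_{u'}$, suitably enlarged), and calls $v\in V_u$ ``$u'$-erroneous'' if $v$ is matched to an $S_i$ whose (exact, after deleting $T$) relation to $V'_{u'}$ contradicts whether $uu'\in E(R)$, while $V_{u'}$ still has vertices beyond $V'_{u'}\cup T\cup W_0$ remaining. The point is that every such $v$ is removed \emph{before} all of $V'_{u'}$, so $V'_{u'}\subseteq W_v$ and the $(1140dK^4)$-similarity test forces $v$ to have at least $|V'_{u'}|-1140dK^4\ge 3dK^4$ neighbors of $G$ inside $V'_{u'}$ (edges between distinct parts, or within a loopless part, agree in $G$ and $G^R$; the complementary case is symmetric). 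If more than $n=1143dK^4$ vertices of $V_u$ were $u'$-erroneous, one would get at least $3dK^4n\ge 3dn$ edges of $G$ inside the $2n$-vertex induced subgraph $G[X\cup V'_{u'}]$, contradicting $d$-degeneracy, which caps it at $2dn$. This yields at most $1143dK^4$ erroneous vertices per ordered pair $(u,u')$, hence at most $1143dK^6$ in total, and is precisely the $n$-independent bound your $|B|$ needs. So you should redirect your $Y_u$ sets from merely absorbing disagreement endpoints to serving as the anchor of this counting argument; without that, the proposal does not prove the lemma.
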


\begin{proof}
Let $W_i$ be the set $W$ at the point when the set $S_i$ is fixed by Algorithm~\ref{alg}, and
let $k$ be the final value of this variables at the end of the algorithm.
Further let $W_0$ be the set $W$ at the end of the algorithm.
By Lemma~\ref{lm-max-degree}, the set $S_i$ is $(u_i,570dK^4)$-perfect in $H[W_i]$ for some $u_i\in V(R)$.
Note that the set $S_i$ is $(u_i,570dK^4)$-perfect in $H[W_j]$ for every $j=i+1,\ldots,k$,
since this property cannot be affected by deleting vertices.
At the point when the set $S_i$ was fixed,
the set $S_i$ was not $(1140dK^4)$-similar to any of the sets $S_1\cap W_i,\ldots,S_{i-1}\cap W_i$.
It follows that the nodes $u_1,\ldots,u_k$ are mutually distinct, which implies $k\le K$.

Let $T_i$ be the set of at most $570dK^4$ vertices of $H[W_i]$ such that
$S_i$ is $u_i$-perfect in $H[W_i\setminus T_i]$, and
let $T=T_1\cup\cdots\cup T_k$.
Further, for each node $u\in V(R)$,
let $V'_u$ be the last $1143dK^4$ vertices of $V_u\setminus T$ removed by Algorithm~\ref{alg} from the set $W$
if such vertices exist; otherwise, let $V'_u=V_u\setminus (T\cup W_0)$.
Note that $|V'_u|\le 1143dK^4$ in either of the cases.

Consider the point when the algorithm removes a vertex $v\in V_{u}$ from the set $W$
because the neighborhood of $v$ is $(1140dK^4)$-similar to the set $S_i\cap W$,
where $W$ is the value of the variable at the time of the removal of $v$.
We say that the vertex $v$ is {\em $u'$-erroneous} for $u'\in V(R)$
if at least one of the vertices of $V_{u'}\setminus (V'_{u'}\cup T\cup W_0)$ has not yet been removed from $W$ and
\begin{itemize}
\item either $uu'$ is an edge of $R$ but $V'_{u'}$ and $S_i$ are disjoint, or
\item $uu'$ is not an edge of $R$ but $V'_{u'}$ is a subset of $S_i$.
\end{itemize}
Note that it can be the case that the nodes $u$ and $u'$ in the above definition coincide, and
a vertex $v$ can be $u'$-erroneous for several choices of $u'$.
Also note that if $v$ is $u'$-erroneous, then $V_{u'}\setminus (V'_{u'}\cup T\cup W_0)\not=\emptyset$,
which implies that $|V'_{u'}|=1143dK^4$.
Let $V_{u,u'}$ be the set of vertices of $V_u\setminus (V'_{u}\cup T)$ that are $u'$-erroneous.

The set $U$ will contain the following vertices:
\begin{itemize}
\item at most $1100dK^5$ vertices contained in $W_0$,
\item at most $k\cdot 570dK^4\le 570dK^5$ vertices contained in $T$,
\item at most $K\cdot 1143dK^4\le 1143dK^5$ vertices contained in the set $V'_u$, $u\in V(R)$, and
\item the vertices of all sets $V_{u,u'}$, $u,u'\in V(R)$.
\end{itemize}
We next show that each of the sets $V_{u,u'}$ contains at most $1143dK^4$ vertices,
which would imply that the size of $U$ does not exceed $4000dK^6$.

Set $n=1143dK^4$ to simplify the notation, and
suppose that there exists a set $V_{u,u'}$ containing more than $n$ vertices for some $u,u'\in V(R)$ (possibly $u=u'$).
Let $X$ be a subset of $V_{u,u'}$ containing exactly $n$ vertices.
Note that that if $u=u'$, the sets $X$ and $V'_{u'}$ are disjoint
because all vertices of $V'_{u'}$ are removed from $W$ after those of $X$.
We first consider the case that $uu'$ is not an edge of $R$,
which includes the case that $u=u'$ and $u$ does not have a loop.
Since the vertices of $X$ are $u'$-erroneous,
the set $V'_{u'}$ contains $n$ vertices and
all vertices of $V'_{u'}$ are removed from $W$ later than the vertices of $X$,
the $d$-degeneracy of $G$ implies that the number of edges between $X$ and $V'_{u'}$ in $G$ is at most $2dn$.
When a vertex $v\in X$ is removed from $W$ by Algorithm~\ref{alg},
it is adjacent to at least $|V'_{u'}|-1140dK^4\ge 3dK^4$ vertices of $V'_{u'}$ in $H=G^R$
since the neighborhood of $v$ is $(1140dK^4)$-similar to $S_i$ and $V'_{u'}\subseteq S_i$.
Hence, the number of edges between $X$ and $V'_{u'}$ in $H=G^R$ is at least $3dK^4n\ge 3dn$.
However, the edges between the vertices of $X$ and those of $V'_{u'}$ are the same in $G$ and $G^R$,
which is impossible.

The other case that we need to consider is that when $uu'$ is an edge of $R$;
this case also includes the case that $u=u'$ and $u$ has a loop.
The arguments are analogous to the first case but we include them for completeness.
We again observe that the number of edges between $X$ and $V'_{u'}$ in $G$ is at most $2dn$.
When a vertex $v\in X$ is removed from $W$,
it is adjacent to at most $1140dK^4$ vertices of $V'_{u'}$ in $H=G^R$
since its neighborhood is $(1140dK^4)$-similar to $S_i$ and the sets $S_i$ and $V'_{u'}$ are disjoint.
It follows that each vertex $v\in X$ is adjacent to at least $|V'_{u'}|-1140dK^4\ge 3dK^4$ vertices of $V'_{u'}$ in $G$.
This implies that the number of edges between $X$ and $V'_{u'}$ in $G$ is at least $3dK^4n\ge 3dn$,
which is again impossible.

To complete the proof of the lemma, we need to show that
the graphs $F\setminus U$ and $E^R\setminus U$ are the same.
Let $v$ and $v'$ be two vertices of $V(H)\setminus U$ such that $v\in V_u$ and $v'\in V_{u'}$.
By symmetry, we can assume that $v$ is removed before $v'$.
Suppose that the vertex $v$ was removed by Algorithm~\ref{alg}
because the neighborhood of $v$ in $H[W]$ was $(1140dK^4)$-similar to a set $S_i$
where $W$ is the value of the set at the time of the removal of $v$ from $W$.
Since the vertex $v'$ does not belong to $U$, it is not contained in $V'_{u'}\cup W_0\cup T$,
which implies that $V'_{u'}\subseteq (V_{u'}\cap W)\setminus T$.
Further, since the set $S_i$ is $u_i$-perfect in $H[W\setminus T]$,
the set $S_i$ either contains $(V_{u'}\cap W)\setminus T$ or is disjoint from $(V_{u'}\cap W)\setminus T$.
Since $v$ is not $u'$-erroneous, the former happens if and only if $uu'$ is an edge in $R$, and
the latter happens otherwise.
Hence, the vertices $v$ and $v'$ are joined by an edge in $F$ if and only if $uu'$ is an edge of $R$.
\end{proof}

Lemma~\ref{lm-deletion} yields the proof of Theorem~\ref{thm-main} as follows.

\begin{proof}[Proof of Theorem~\ref{thm-main}.]
Fix $d$ and $K$, and set $m=4000dK^6$.
Let $G_0$ be the input graph, and
suppose that $G$ is the $d$-degenerate graph and $R$ is the $K$-node pattern such that $G_0=G^R$.
Note that both $G$ and $R$ are not given to the algorithm \AAA.

The algorithm \AAA{} applies Algorithm~\ref{alg} to the graph $G_0$ and integers $d$ and $K$, and
Algorithm~\ref{alg} outputs a graph $F$.
By Lemma~\ref{lm-deletion}, the graphs $E^R$ and $F$ agree on all but at most $m$ vertices,
where $E$ is the empty graph on the same vertex set as $G_0$.
The algorithm \AAA{} then outputs the graph $G_0\triangle F$,
i.e., the graph with the same vertex set as $G_0$ and
with the edge set that is the symmetric difference of the edge sets of $G_0$ and $F$.
Observe that the graph $G=G^R\triangle E^R$ and the output graph $G_0\triangle F=G^R\triangle F$
differ exactly where the graphs $E^R$ and $F$ differ.
It follows that the output graph $G_0\triangle F$ and the graph $G$ agree on all but at most $m$ vertices,
which implies that the output graph $G_0\triangle F$ is $(d+m)$-degenerate.
\end{proof}

\section{FO model checking}
\label{sec-FO}

In this section, we prove Theorem~\ref{thm-model}, which is our second main result, and
also discuss first order model checking in graphs obtained by complementing parts of degenerate graphs.
We start with proving Theorem~\ref{thm-model}.

\begin{proof}[Proof of Theorem~\ref{thm-model}.]
Fix a graph class $\GG$ with bounded expansion and an integer $k$, and set $K=2^k$.
Since the graph class $\GG$ has bounded expansion, there exists an integer $d$ such that
every graph in $\GG$ is $d$-degenerate.
Set $m=4000dK^6$ and
let $\HH$ be the graph class that contain all $m$-apices of subgraphs of graphs contained in $\GG$.
By Proposition~\ref{prop-apex}, the graph class $\HH$ has bounded expansion.

Let $G'$ be a graph obtained from a graph $G\in\GG$ by complementing on at most $k$ subsets of the vertex set of $G$, and
let $V$ be the common vertex set of $G$ and $G'$.
Note that there exists a $K$-node pattern $R$ (which can be chosen independently of $G$ and $G'$
but this fact is not needed in our proof) and
a partition $(V_u)_{u\in V(R)}$ of the vertex set $V$ such that $G'=G^R$.
Apply Algorithm~\ref{alg} to $G'$, $d$ and $K$, and let $F$ be the output graph.
Since the graphs $F$ and $E^R$,
where $E$ is the empty graph on the vertex set $V$,
coincide on all but at most $m$ vertices by Lemma~\ref{lm-deletion},
there exists a $(K+m)$-node pattern $R_F$ such that $F=E^{R_F}$
for a suitable partition $(V'_u)_{u\in V(R_F)}$ of the vertex set $V$.
Moreover, the pattern $R_F$ and the partition $(V'_u)_{u\in V(R_F)}$ can be efficiently constructed:
the at most $K+m$ twin-classes of the graph $F$ form the partition $(V'_u)_{u\in V(R_F)}$ and
the partition into twin-classes uniquely determine the pattern.

Let $H$ be the graph with the vertex set $V$ and
the edge set being the symmetric difference of the edge sets of $G'$ and $F$.
Observe that $H^{R_F}=G'$.
By Lemma~\ref{lm-deletion}, the graphs $G$ and $H$ agree on all but at most $m$ vertices,
which implies that the graph $H$ belongs to the class $\HH$.
The application of the pattern $R_F$ to $H$ can be simulated
by viewing the partition $(V'_u)_{u\in V(R_F)}$ as a vertex $(K+m)$-coloring and
encoding the application of the pattern $R_F$ by a first order formula.
In particular,
there exists a simple first order graph interpretation scheme $I$ of $(K+m)$-vertex colored graphs such that $I(H)=G'$.
Since there are only finitely many choices of $R_F$ (because the number of nodes of $R_F$ is bounded) and
it is possible to use disjoint sets of colors to encode applications of different patterns $R_F$,
there exists such an interpretation scheme $I$ that is universal for all patterns $R_F$.
The fixed parameter tractability of the first order model checking in $\GG^k$ is now implied
by the fixed parameter tractability of the first order model checking in graph classes with bounded expansion that
contain graphs vertex-colored by a bounded number of colors,
which directly follows from the results of~\cite{bib-dawar09+,bib-dvorak10+,bib-dvorak13+}.
\end{proof}

The first order model checking in $d$-degenerate graphs is hard from the point of fixed parameter tractability,
however, many parameterized problems that are hard for general graphs
become fixed parameter tractable when restricted to $d$-degenerate graphs.
Two prominent examples of such problems are the $k$-clique problem,
which asks whether the input graph contains a complete subgraph with $k$ vertices, and
the $k$-independent set problem,
which asks whether the input graph contains $k$ independent vertices.
Both these problems are fixed parameter tractable when parameterized by $d$ and $k$.

To explore hopes of
extending the fixed parameter tractability results for $d$-degenerate graphs
to classes of graphs obtained by complementing $d$-degenerate graphs,
we provide a brief analysis of the fixed parameter tractability of the $k$-clique problem
in graphs obtained from $d$-degenerate graphs by applying patterns
In the rest of this section, $\GG_d$ denotes the class of $d$-degenerate graphs and
$\GG_d^R$ for a pattern $R$ will be the class of all graphs that can be obtained from a graph $G\in\GG_d$ by applying the pattern $R$,
i.e., the class of all graphs $G^R$ for $G\in\GG_d$.
We start with considering the parameterization by both $R$ and $k$,
where the problem turns out to be tractable for $d=1$ and hard for $d\ge 2$
as given in the following two propositions.

\begin{proposition}
\label{prop-deg-cw}
The $k$-clique problem in the class $\GG_1^R$
is fixed parameter tractable
when parameterized by a pattern $R$ and an integer $k$.
\end{proposition}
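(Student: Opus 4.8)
The plan is to prove the stronger statement that the whole class $\GG_1^R$ has bounded clique-width, and then to read off the algorithm from standard metatheorems. Recall that a graph is $1$-degenerate exactly when it is a forest, and that $G^R$ is obtained from $G$ by complementing within each part $V_u$ with $u$ carrying a loop and between each pair of parts $V_u,V_{u'}$ with $uu'\in E(R)$. I would work with rank-width rather than clique-width directly, because the pattern operation interacts cleanly with the cut-rank function. For a bipartition $(A,B)$ of the vertex set write $\mathrm{cr}_G(A)$ for the $\mathrm{GF}(2)$-rank of the $A\times B$ adjacency submatrix of $G$. The key step is to bound, uniformly over all bipartitions, the difference between $\mathrm{cr}_{G^R}(A)$ and $\mathrm{cr}_G(A)$.

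First I would observe that the matrix $D$ recording which pairs $(a,b)\in A\times B$ change adjacency between $G$ and $G^R$ depends only on the parts containing $a$ and $b$: writing $u(v)$ for the node whose part contains $v$, we have $D_{a,b}=\Phi_{u(a),u(b)}$, where $\Phi$ is the fixed $|V(R)|\times|V(R)|$ symmetric $0/1$ matrix whose $(u,u')$ entry is $1$ exactly when $uu'\in E(R)$ or $u=u'$ has a loop. Thus $D$ is a row-and-column blow-up of $\Phi$, so its $\mathrm{GF}(2)$-rank is at most $\mathrm{rank}(\Phi)\le |V(R)|=K$. Since the adjacency submatrices of $G$ and $G^R$ over the cut $(A,B)$ differ by $D$, we get $|\mathrm{cr}_{G^R}(A)-\mathrm{cr}_G(A)|\le K$ for every bipartition. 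Evaluating an optimal rank-decomposition of the forest $G$ on $G^R$ instead, and using that forests have rank-width at most $1$, yields rank-width of $G^R$ at most $K+1$, and hence clique-width at most $2^{K+2}-1$ by the standard rank-width/clique-width inequality. Crucially, this bound is intrinsic to $G^R$, so it holds even though the algorithm is given neither $G$ nor the partition $(V_u)_{u\in V(R)}$.

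With the clique-width bound in hand, the algorithm is immediate. Given the input $G^R\in\GG_1^R$, I would first run the cubic-time approximation algorithm for rank-width: since the rank-width is at most $K+1$, it returns a decomposition, and hence a clique-width expression, of width bounded by a function of $K$. The existence of a $k$-clique is expressed by the first order (hence $\mathrm{MSO}_1$) sentence $\exists x_1\cdots\exists x_k\bigl(\bigwedge_{i<j}(x_i\neq x_j\wedge E(x_i,x_j))\bigr)$, so the theorem of Courcelle, Makowsky and Rotics~\cite{bib-courcelle00+} evaluates it in time $f(k,K)\cdot n$ on a graph equipped with such an expression. Composing the two steps gives an algorithm that is fixed parameter tractable in $R$ (equivalently $K$) and $k$.

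The only genuine obstacle is the clique-width bound, and inside it the single point requiring care is that the perturbation matrix $D$ has bounded rank; once $D$ is recognized as a blow-up of the constant-size matrix $\Phi$ this is routine. Everything else — the equivalence of $1$-degeneracy with being a forest, forests having rank-width at most one, the translation between rank-width and clique-width, the approximate rank-width algorithm, and the model-checking metatheorem — I would invoke as black boxes, and I do not expect the remaining verifications to present difficulty.
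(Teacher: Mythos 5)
Your proposal is correct and follows essentially the same route as the paper: both arguments note that $1$-degenerate graphs are forests of rank-width at most one, show that applying a $K$-node pattern changes the cut-rank of any bipartition by at most $K$ (so $\GG_1^R$ has rank-width at most $K+1$ and hence bounded clique-width), and then invoke the Courcelle--Makowsky--Rotics metatheorem. Your write-up merely supplies a few details the paper leaves implicit, namely the blow-up argument bounding the rank of the perturbation matrix and the use of the rank-width approximation algorithm to actually produce a decomposition.
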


\begin{proof}
The class $\GG_1$ of $1$-degenerate graphs is the class of all forests.
Recall that a rank-width of a graph $G$ is defined as the minimum $r$ such that
there exists a tree $T$ with leaves one-to-one corresponding to the vertices of $G$ such that
each edge $e$ of $T$ determines a vertex cut $(A,B)$ of $G$ ($A$ and $B$ are the vertices assigned
to the leaves of the two components of $T\setminus e$) such that the adjacency matrix of the cut $(A,B)$ has rank at most $r$.
It is not hard to see that each forest has rank-width at most one.
Next observe that if the adjacency matrix of a vertex cut $(A,B)$ in a graph $G$ has rank $r$,
then the adjacency matrix of the cut $(A,B)$ in $G^R$ has rank at most $r+K$.
Consequently, if $G$ is a graph with rank-width $r$ and $R$ is a $K$-node pattern,
then the rank-width of $G^R$ is at most $r+K$.
We conclude that all graphs contained in the class $\GG_1^R$ have bounded rank-width,
which implies that all graphs contained in the class $\GG_1^R$ have bounded clique-width~\cite{bib-oum06+}.
Since monadic second order model checking is fixed parameter tractable
in classes of graphs with bounded clique-width~\cite{bib-courcelle00+},
the statement of the proposition follows.
\end{proof}

\begin{proposition}
\label{prop-deg-neg}
The $k$-clique problem in the class $\GG_2^R$ is $W[1]$-hard
when parameterized by a pattern $R$ and an integer $k$.
\end{proposition}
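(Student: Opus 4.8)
The plan is to give a parameterized reduction from the \emph{multicolored $k$-clique} problem, which is well known to be $W[1]$-hard: given a graph $G$ whose vertex set is partitioned into color classes $V_1,\dots,V_k$ (and where we may assume, as usual, that $G$ has no edge inside a class), one asks for a clique containing exactly one vertex of each class. I will construct, in polynomial time, a $2$-degenerate graph $H$ and a $K$-node pattern $R$ with $K\le k+\binom{k}{2}$, together with an integer $k'=k+\binom{k}{2}$, such that $G$ has a multicolored clique if and only if $H^R$ contains a clique of size $k'$. Since $R$ and $k'$ depend only on $k$, this is a valid parameterized reduction and establishes the proposition.

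The graph $H$ will be the \emph{subdivision} of $G$: its vertex set is $V(G)\cup E(G)$, where each edge-vertex $e$ (the vertex subdividing an edge of $G$) is joined to its two endpoints. Ordering the original vertices before the edge-vertices shows that $H$ is $2$-degenerate, since original vertices are pairwise non-adjacent while each edge-vertex has exactly its two (earlier) endpoints as neighbors. The nodes of the pattern $R$ index the classes $V_1,\dots,V_k$ together with the classes $E_{ij}$ of edges of $G$ running between $V_i$ and $V_j$, so that $\bigcup_{i<j}E_{ij}=E(G)$ and these classes form the partition of $V(H)$. The pattern $R$ has no loops, and $uu'$ is an edge of $R$ for every pair of distinct nodes \emph{except} the pairs $(V_i,E_{ij})$ and $(V_j,E_{ij})$.

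A direct case analysis of the adjacencies of $H^R$ under this partition yields the target graph $T=H^R$: each class $V_i$ and each class $E_{ij}$ induces an independent set; any two distinct vertex classes are completely joined, any two distinct edge classes are completely joined, and $V_\ell$ is completely joined to $E_{ij}$ whenever $\ell\notin\{i,j\}$; finally, for $\ell\in\{i,j\}$ the pair $(V_\ell,E_{ij})$ carries no edge of $R$, so $H^R$ restricted to it equals $H$, i.e.\ a vertex $v\in V_\ell$ and an edge-vertex $e\in E_{ij}$ are adjacent precisely when $e$ is incident to $v$. In $T$ a clique can contain at most one vertex from each of the $k+\binom{k}{2}$ classes, so a clique of size $k'$ must pick exactly one $v_i\in V_i$ and one $e_{ij}\in E_{ij}$; the incidence adjacencies then force $e_{ij}$ to have endpoints $v_i$ and $v_j$, so the $v_i$ form a multicolored clique, and conversely the choice $e_{ij}=v_iv_j$ turns any multicolored clique into a clique of size $k'$ in $T$.

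The main point to verify, and the step carrying the whole argument, is that all the \emph{dense} adjacency of $T$ needed to force global consistency is produced \emph{for free} by complementing empty parts through $R$, while the only genuinely structured adjacency --- the vertex/edge incidences --- is sparse (each edge-vertex meeting exactly two vertex classes) and is therefore already present in the $2$-degenerate graph $H$ itself; this is exactly the feature that fails for forests and makes $d\ge 2$ necessary. It then remains to observe that the constructed $R$ is a genuine pattern (for $k\ge 3$ no two of its nodes are twins, and it has no loops), or else to replace it by its reduction, which leaves $H^R$ unchanged; this completes the reduction.
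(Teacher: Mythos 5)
Your proposal is correct and follows essentially the same route as the paper: a reduction from multicolored $k$-clique via the subdivision graph $H$ of $G$ (which is $2$-degenerate) and the loopless pattern on $k+\binom{k}{2}$ nodes that joins all pairs except $(u_i,u_{ij})$ and $(u_j,u_{ij})$. Your direct analysis of cliques in $H^R$ and your explicit check that $R$ is a genuine pattern are slightly more careful than the paper's, but the construction and argument are the same.
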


\begin{proof}
We present a reduction from the multicolored $k$-clique problem, which is a well-known $W[1]$-hard problem.
The multicolored $k$-clique problem asks whether a given $k$-partite graph contains a clique of order $k$.
Let $G$ be an arbitrary $k$-partite graph,
let $V_1,\ldots,V_k$ be its vertex parts, and
let $H$ be the graph obtained from $G$ by subdividing each edge.
Note that $H$ can be viewed as a $\left(k+{k\choose 2}\right)$-partite graph
with parts $V_1,\ldots,V_k$ and parts $V_{ij}$, $1\le i<j\le k$, formed by vertices of degree two
associated with edges between the parts $V_i$ and $V_j$ in the graph $G$.
Let $R$ be a pattern with $k+{k\choose 2}$ nodes $u_i$, $1\le i\le k$, and $u_{ij}$, $1\le i<j\le k$, such that
$R$ has no loops
but all pairs of nodes of $R$ are joined edges except for pairs $u_i$ and $u_{ij}$ and pairs $u_j$ and $u_{ij}$, $1\le i<j\le k$.
Set $V_{u_i}=V_i$, $1\le i\le k$, and $V_{u_{ij}}=V_{ij}$, $1\le i<j\le k$;
this yields a vertex partition $(V_u)_{u\in V(R)}$ of the graph $H$.
The graph $H^R$ is a $\left(k+{k\choose 2}\right)$-partite graph.
Note that if $k\ge 4$, then $H^R$ contains a clique with $k+{k\choose 2}$ vertices if and only if
$H$ contains a subdivision of a clique with $k$ vertices.
Consequently, if $k\ge 4$, then $G$ contains a clique with $k$ vertices if and only if
$H^R$ contains a clique with $k+{k\choose 2}$ vertices.
Since $H$ is a $2$-degenerate graph, the proposition now follows.
\end{proof}

Proposition~\ref{prop-deg-neg} leaves it open
whether the $k$-clique problem is fixed parameter tractable
when $d$ and $R$ are fixed and $k$ is the parameter.
We address this affirmatively in the next proposition.

\begin{proposition}
\label{prop-deg-pos}
For every integer $d$ and every pattern $R$,
the $k$-clique problem in the class $\GG_d^R$
is fixed parameter tractable
when parameterized by $k$.
\end{proposition}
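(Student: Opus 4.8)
The plan is to exploit the $d$-degeneracy of $G$ to show that any large clique of $G^R$ is, up to a bounded remainder, an independent set of $G$, and then to locate such a structure by combining the recovery algorithm with bounded enumeration. Throughout, $d$ and $R$ are fixed, so $K=|V(R)|$ and $m=4000dK^6$ are constants and only $k$ is the parameter. First I would run Algorithm~\ref{alg} on the input $G^R$ with parameters $d$ and $K$. By Lemma~\ref{lm-deletion} this yields a graph $F$ and a set $U$ with $|U|\le m$ such that $F\setminus U=E^R\setminus U$; setting $H=G^R\triangle F$ as in the proof of Theorem~\ref{thm-main}, the graph $H$ coincides with $G$ on all vertices outside $U$, and the twin-classes of $F$ recover the restriction of the partition $(V_u)_{u\in V(R)}$ to $V\setminus U$. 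Thus, off the constant-size set $U$, I know both the partition into parts and the graph $G$ itself.

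The combinatorial heart is the following dichotomy. Fix a $k$-clique $S$ of $G^R$ and let $S'=S\setminus U$. Call a node $u$ \emph{heavy} if $|S'\cap V_u|>d+1$. If $u$ has no loop then $S'\cap V_u$ is a clique of $G$, hence of size at most $d+1$, so every heavy node has a loop; and if two distinct heavy nodes $u,u'$ were non-adjacent in $R$, then $S'\cap V_u$ and $S'\cap V_{u'}$ would induce a complete bipartite subgraph of $G$, forcing $\min(|S'\cap V_u|,|S'\cap V_{u'}|)\le d$, a contradiction. Hence the set $B$ of heavy nodes is a looped clique of $R$, which makes $I:=\bigcup_{u\in B}(S'\cap V_u)$ a clique of $G^R$ if and only if $I$ is independent in $G$; moreover the remaining set $T:=S'\setminus I$ has at most $K(d+1)$ vertices.

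The algorithm then enumerates the structure of a putative clique: the intersection $S\cap U$ (at most $2^{m}$ choices), a looped clique $B$ of $R$ (at most $2^K$ choices), and a set $T\subseteq V\setminus U$ of size at most $K(d+1)$ (at most $O\!\left(n^{K(d+1)}\right)$ choices, polynomial in $n$ for fixed $d,K$). For each guess I first check, directly in the input $G^R$, that $T\cup(S\cap U)$ is a clique of $G^R$. I then let $Y$ be the set of vertices $v\in V\setminus U$ lying in a part indexed by $B$ and adjacent in $G^R$ to every vertex of $T\cup(S\cap U)$, after deleting $T$; crucially these adjacencies are read off from the given graph $G^R$, so knowing $G$ on $U$ is unnecessary. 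Since $B$ is a looped clique of $R$, a set $I\subseteq Y$ is a clique of $G^R$ adjacent to all chosen vertices exactly when $I$ is independent in $G[Y]=H[Y]$, a $d$-degenerate graph known explicitly. I therefore search for an independent set of size $k-|T|-|S\cap U|$ in $G[Y]$ and answer \textsc{yes} for this guess iff one exists. Each such search is FPT in $k$: a $d$-degenerate graph on $N$ vertices has an independent set of size at least $N/(d+1)$, so if $|Y|\ge(d+1)\bigl(k-|T|-|S\cap U|\bigr)$ the answer is immediate, and otherwise $|Y|<(d+1)k$ and I brute-force over the at most ${(d+1)k\choose k}$ candidate subsets. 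Completeness follows by taking $S\cap U$, the heavy nodes $B$ and the set $T$ from a genuine $k$-clique $S$; soundness is immediate, since any successful guess reassembles a $k$-clique $T\cup(S\cap U)\cup I$ of $G^R$.

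The step I expect to require the most care is the recovery used in the first paragraph: reading the correct partition of $V\setminus U$ off the twin-classes of $F$ and matching these classes to the nodes of the fixed pattern $R$, while absorbing every ambiguous or small part into the constant-size exceptional set $U$ so that the dichotomy and the enumeration remain exact. The degeneracy dichotomy itself is clean and the independent-set subroutine is standard; the real bookkeeping lies in maintaining a perfectly accounted bad set so that the three-way guess provably covers every $k$-clique.
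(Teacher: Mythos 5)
Your proposal is correct in its core combinatorial content but takes a genuinely different route from the paper's. After the same initial reduction via Theorem~\ref{thm-main}, the paper argues much more directly: if some looped part $V_u$ has size linear in $k$, the answer is immediately yes (a $(d+1)$-colouring of $G[V_u]$ yields an independent set of size $k$, i.e., a $k$-clique of $G^R$); otherwise every part carries at most $\max\{2^{dk},2^d|V_u|\}$ cliques of $H[V_u]$, and one enumerates all $K$-tuples of such cliques, giving running time $O(2^{dkK}n^{K+O(1)})$. Your argument instead classifies the nodes met by a hypothetical $k$-clique into heavy and light, proves (correctly, via the clique bound and $K_{d+1,d+1}$-freeness of $d$-degenerate graphs) that the heavy nodes form a looped clique of $R$, guesses the bounded light part, and reduces the heavy part to independent set in a $d$-degenerate graph, solved by the ``either $|Y|\ge(d+1)k'$ or brute-force over ${(d+1)k\choose k}$ sets'' dichotomy. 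Both yield FPT algorithms with $n^{O(K d)}$-type polynomial factors; the paper's is shorter, yours isolates a reusable structural fact about large cliques in $\GG_d^R$. The one genuine wrinkle is your handling of the exceptional set $U$: Lemma~\ref{lm-deletion} only asserts that such a $U$ \emph{exists}, and its proof constructs $U$ from the unknown partition $(V_u)_{u\in V(R)}$, so your algorithm cannot actually compute $U$, enumerate $S\cap U$, or form $Y\subseteq V\setminus U$. The repair is exactly the manoeuvre the paper itself uses: drop $U$ altogether, pass to $H=G^R\triangle F$ together with the pattern $R_F$ and the partition read off the twin-classes of $F$ (as in the proof of Theorem~\ref{thm-model}), and run your dichotomy on $H$, which is a \emph{known} $(d+m)$-degenerate graph with a known $(K+m)$-node pattern; all of your estimates survive with $d+m$ and $K+m$ in place of $d$ and $K$, and the guess of $S\cap U$ disappears entirely.
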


\begin{proof}
We present an algorithm that decides whether a graph $H\in\GG_d^R$ contains a complete subgraph with $k$ vertices.
In view of Theorem~\ref{thm-main} and Lemma~\ref{lm-deletion},
we may assume (at the expense of considering a larger integer $d$ and a larger pattern $R$) that
the algorithm is given a graph $G\in\GG_d$, a pattern $R$ and a vertex partition $(V_u)_{u\in V(R)}$ such that $H=G^R$.
If $R$ contains a node $u$ with a loop such that $|V_u|>dk$,
then $H$ contains a complete subgraph with $k$ vertices:
indeed, since the subgraph $G[V_u]$ is $(d+1)$-colorable,
$G[V_u]$ contains an independent set of at least $k$ vertices;
this set forms a complete subgraph in $H=G^R$.
Hence, we may assume that the following holds for every node $u$ of $R$: $u$ has no loop or $|V_u|\le dk$.

We next observe that $H[V_u]$ contains at most $\max\{2^{dk},2^d|V_u|\}$ (not necessarily inclusionwise maximal) complete subgraphs.
Indeed, if $|V_u|\le dk$, then there are at most $2^{dk}$ subsets of $V_u$ and the claim follows.
Otherwise, $u$ has no loop and $G[V_u]=H[V_u]$ and the claim follows since $H[V_u]$ is $d$-degenerate.
Let $\CC_u$ be the set of all complete subgraphs of $H[V_u]$ (including the one with no vertices,
i.e., the one induced by the empty set).
The algorithm now tests all possible combinations of subgraphs from $\CC_u$, $u\in V(R)$,
whether they form a complete subgraph in $H$.
This identifies all complete subgraphs of $H$.
The running time of the algorithm is bounded by the product of the sizes of the set $\CC_u$, $u\in V(R)$,
i.e., the algorithm runs in time $O\left(2^{dkK}n^{K+O(1)}\right)$,
where $n$ is the number of vertices of the input graph $H$ and $K$ is the number of nodes of the pattern $R$.
\end{proof}

\section{Conclusion}
\label{sec-concl}

Our results have been motivated by the characterization of graphs that are first interpretable in graphs with bounded maximum degree
as given in Proposition~\ref{prop-maxdeg1}.
While we were able to translate Theorem~\ref{thm-maxdeg2} to the setting of Conjecture~\ref{conj-folklore} and
even the more general setting of degenerate graphs,
Proposition~\ref{prop-maxdeg1} fails to extend to the setting of Conjecture~\ref{conj-folklore},
which we now outline.
Consider a class $\GG$ of all star forests, one of the simplest classes of sparse graphs with unbounded maximum degree, and
also consider the simple first order graph interpretation scheme $I$ such that
two vertices in $I(G)$ are joined by an edge iff their distance in a graph $G$ is at most two.
The graph class $I(\GG)$ contains all graphs $G$ such that each component of $G$ is a complete graph.
Let $\HH$ be a graph class and $R$ a pattern such that $I(\GG)\subseteq\HH^R$,
where $\HH^R$ is the class of graphs $H^R$, $H\in\HH$.
Let $K$ be the number of nodes of $R$ and consider a graph $G\in\GG$ formed by $k\cdot K$ stars each with $k\cdot K-1$ leaves
for an integer $k\ge K+1$.
The graph $I(G)$ consists of $k\cdot K$ cliques each having $k\cdot K$ vertices;
let $C_1,\ldots,C_{k\cdot K}$ be the vertex sets of the $k$ cliques forming the graph $I(G)$.
Suppose that $I(G)=H^R$ for a graph $H\in\HH$ and a vertex partition $(V_u)_{u\in V(R)}$ of $H$.
There exist a node $u$ such that $|V_u\cap C_i|\ge k$ for at least two different indices $i$;
by symmetry we can assume that $|V_u\cap C_1|\ge k$ and $|V_u\cap C_2|\ge k$.
If the node $u$ has a loop in $R$, then the graph $H$ contains all edges between $V_u\cap C_1$ and $V_u\cap C_2$,
i.e., $H$ contains a complete bipartite subgraph with parts of sizes $k$.
If the node $u$ does not have a loop in $R$, then $H[V_u\cap C_1]$ is a complete subgraph with $k$ vertices,
i.e., $H$ contains a complete bipartite subgraph with parts of sizes $\lfloor k/2\rfloor$.
We conclude that the graph class $\HH$ contains graphs with arbitrary large complete bipartite subgraphs;
this implies that the graph class $\HH$ does not have bounded expansion.

In view of the results presented in Section~\ref{sec-FO},
it is natural to wonder about the fixed parameter tractability of other important graph problems.
One of such problems is the $k$-dominating set problem,
which asks whether the input graph contains $k$ vertices such that
each vertex of the graph is one of these $k$ vertices or adjacent to at least one of them.
The $k$-dominating set problem is known to be fixed parameter tractable for $d$-degenerate graphs~\cite{bib-alon07+}
when parameterized by $d$ and $k$.
However, we were not able to resolve the fixed parameter complexity of the $k$-dominating set problem
in graphs obtained by complementing vertex subsets of $d$-degenerate graphs and
even the following particular case seems to be challenging.

\begin{problem}
Is the $k$-dominating set problem in the complements of $d$-degenerate graphs
fixed parameter tractable when parameterized by $d$ and $k$?
\end{problem}

\end{document}